\documentclass[11pt,reqno]{amsproc}
\usepackage[margin=1.0in]{geometry}
\usepackage[english]{babel}
\usepackage[utf8]{inputenc}
\usepackage{ulem}
\usepackage{amsmath,amssymb,amsthm, comment, mathtools}
\usepackage{hyperref,mathrsfs,xcolor, amsfonts, amsbsy}
\usepackage{enumitem}
\usepackage{genealogytree}
\usepackage{cancel}

\newtheorem{thm}{Theorem}[section]
\newtheorem{prop}[thm]{Proposition}
\newtheorem{lem}[thm]{Lemma}
\newtheorem{cor}[thm]{Corollary}

\theoremstyle{definition}
\newtheorem{definition}[thm]{Definition}

\theoremstyle{remark}
\newtheorem{remark}[thm]{Remark}

\newcommand{\Reg}{\mathsf{Reg}}

\title{On the collapse of three point vortices on surfaces}

\author{Theodore D. Drivas}
\address{Department of Mathematics, Stony Brook University, Stony Brook, NY, 11790}
\email{tdrivas@math.stonybrook.edu}

\author{Boris A. Khanikati}
\address{Department of Physics, Harvard University, Cambridge, MA 02138, USA\newline
\indent Department of Mathematics, Stony Brook University, Stony Brook, NY, 11790}
\email{bkhanikaev@g.harvard.edu}

\author{Valeriya A. Khanikati}
\address{Department of Mathematics, University of Toronto, Toronto, M5S 2E4 ON, Canada\newline
\indent Department of Mathematics, Stony Brook University, Stony Brook, NY, 11790}
\email{valeriya.khanikaeva@stonybrook.edu}

\date{}

\begin{document}
\begin{abstract}
Point vortices represent an important reduced model describing two-dimensional ideal fluid dynamics. It is well known that there exist three-vortex configurations on the Euclidean plane $\mathbb{R}^2$ that exhibit finite-time singularities, i.e., collapse to a single point. Moreover, in $\mathbb{R}^2$, such collapses occur only self-similarly. Here, we investigate the extent to which this phenomenon persists on curved surfaces. We show that self-similar collapse is a universal feature of surfaces of nonnegative constant curvature, namely the plane and the sphere. In contrast, on the hyperbolic plane, it is shown that self-similar collapsing solutions do not exist with respect to any distance variable defined by an analytic function of the geodesic distance. Finally, we establish the existence of nearly self-similar collapse of three vortices on arbitrary smooth surfaces embedded in $\mathbb{R}^3$.
\end{abstract}
\maketitle

\section{Introduction}
The point vortex model is one of the simplest mathematical models for studying two-dimensional (2D) turbulence. By concentrating vorticity into finitely many Dirac masses of prescribed circulations, it thereby reduces the Euler evolution to a finite-dimensional system of ordinary differential equations. This reduction originates in the nineteenth century and was later recognized as a rich Hamiltonian system \cite{Helmholtz1858, Kelvin1869, Kirchhoff1876, aref2007}, where the resulting equations have widely been used as a ``mathematical playground" to study interesting and non-trivial phenomena \cite{aref2007, novikov1975, aref1983, aref1988}.  

A central appeal of the point vortex model is that it preserves several structural features of ideal hydrodynamics while being far more analytically tractable. First, as mentioned previously, the governing equations are Hamiltonian. Second, the dynamics possesses conserved quantities associated with Euclidean symmetries, which largely constrain motion at low vortex numbers. For example, for $N=3$ vortices on the plane, the system is integrable. Moreover, when expressed in trilinear coordinates for the vortex triangle, the reduced problem is governed by an integrable two-dimensional system \cite{Synge1949}. Given how amenable the three-vortex problem is to detailed analysis, a thorough classification of various solutions has been done, including equilibria, periodic motion, and singular solutions \cite{Groebli1877}. 

Among these singular solutions are those of \textit{finite-time collapse}: configurations in which vortices coalesce to a single point in finite time. This type of phenomenon shall be the focus of this paper. 

The collapse of three point vortices on the plane is a classical and well-understood phenomenon \cite{leoncini2000collapse, Krishnamurthy, aref2010}. Owing to the conservation laws of the system, the necessary and sufficient conditions for collapse imply that the motion is necessarily self-similar: the triangle formed by the vortices remains similar to itself throughout the evolution. 

It is perhaps natural to extend the motion of point vortices to surfaces beyond the plane; one such motivation is that the Earth may be idealized as a sphere supporting a nearly two-dimensional fluid atmosphere. The behavior of point vortices on surfaces is relatively well studied, particularly for closed surfaces \cite{kimura1999vortexConst, Kimura1987VortexMO, drivas2024vortexpairs, Boatto2008VorticesOC, dritschel2015vortices, bjorn2022, sakajo2016}. In fact, on the sphere specifically, the existence of collapsing configurations has been known for quite some time now \cite{kidambi1998sphere}. 

Kidambi and Newton studied self-similar collapse on the sphere and derived necessary and sufficient conditions for its occurrence. Since the relevant first integrals on the sphere depend on the mutual chord lengths in exactly the same way as in the planar problem, we here show that the classical planar argument extends directly to the spherical setting. As a consequence, every collapsing three-vortex configuration on the sphere collapses self-similarly with respect to chord length.

It is then interesting to consider on which surfaces specifically collapse persists self-similarly. The existence of curvature essentially introduces some length-scale, and given that curvature is not constant, such collapse is quite unlikely. A natural hypothesis would be that self-similar collapse exists on surfaces of constant curvature, e.g., the plane, sphere, or hyperbolic plane. We show that such an assumption is not true, and the answer depends on the sign of the curvature: while collapse on the plane and sphere is necessarily self-similar, the hyperbolic plane admits no self-similar collapsing solutions. This result is summarized in the following theorem. 

\begin{thm}\label{thm:1-1}
For the three-vortex problem, every collapsing configuration on the plane or sphere is self-similar, where self-similarity is understood with respect to Euclidean distance in the planar case and chord length in the spherical case. By contrast, the hyperbolic plane admits no self-similar collapsing solutions with respect to any distance variable given by an analytic function of the geodesic distance.
\end{thm}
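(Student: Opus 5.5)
The theorem has two parts: a positive statement (plane and sphere) and a negative one (hyperbolic plane).

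\emph{Plane and sphere.} On both surfaces the evolution of the squared mutual distances $\ell_{ij}^2$ (Euclidean distance on the plane, chord length on the sphere of radius $R$) closes into a system of the form
\[
\frac{d}{dt}\ell_{12}^2=\frac{2\Gamma_3}{\pi}\,A\left(\frac{1}{\ell_{23}^2}-\frac{1}{\ell_{31}^2}\right)
\]
and its cyclic permutations. Here $A$ is the signed area of the triangle in the planar case and a signed volume, $V/R$, in the spherical case. This system conserves two quantities:
\[
H=\prod_{i<j}(\ell_{ij}^2)^{\Gamma_i\Gamma_j}, \qquad M=\sum_{i<j}\Gamma_i\Gamma_j\,\ell_{ij}^2 .
\]
Along a collapsing solution all $\ell_{ij}\to0$, so $M=0$. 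Since $H$ is conserved and the $\ell_{ij}$ tend to zero together, $H$ must be scale invariant, which forces $\sum_{i<j}\Gamma_i\Gamma_j=0$. This is the classical Kidambi--Newton and Aref necessary condition.

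I then introduce the shape variables $b_1=\ell_{23}^2/\ell_{12}^2$ and $b_2=\ell_{31}^2/\ell_{12}^2$. The three distance equations share the common prefactor $A$ (resp.\ $V$), so I divide it out by a time reparametrization. Using the two conservation laws, the reduced motion of the shape point lies on the intersection of two curves, $H=\mathrm{const}$ and $M=0$, in the two-dimensional shape space. Because both $H$ and $M$ are homogeneous when $\sum\Gamma_i\Gamma_j=0$, this intersection is a ray in $\ell^2$-space. Hence the ratios $\ell_{ij}/\ell_{kl}$ are constant, which is exactly self-similarity.

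The detail to check is that $H$ and $M$ are functionally independent on the relevant set. I would handle this as in the planar proof, by a Jacobian computation that fails only at relative equilibria, and these do not collapse. The spherical case reduces verbatim to the planar one because the first integrals and the prefactor structure coincide in the chord variables.

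\emph{Hyperbolic plane.} Suppose self-similar collapse holds in a variable $s=f(d)$, with $f$ analytic, $f(0)=0$, and $f'(0)\neq0$, so that $s_{ij}(t)=\lambda(t)\,\sigma_{ij}$ with fixed ratios $\sigma_{ij}$ and $\lambda\to0$. In the hyperboloid model the conserved quantities are
\[
H=\prod_{i<j}(\cosh d_{ij}-1)^{\Gamma_i\Gamma_j}, \qquad M=\sum_{i<j}\Gamma_i\Gamma_j\cosh d_{ij}.
\]
Collapse forces $M=\sum\Gamma_i\Gamma_j$. Writing $\cosh d_{ij}=g(s_{ij})$, where $g$ is analytic with $g(s)-1\sim cs^2$, we obtain
\[
\sum_{i<j}\Gamma_i\Gamma_j\bigl(g(\lambda\sigma_{ij})-1\bigr)=0 \quad\text{for all }\lambda \text{ in an interval.}
\]
By analyticity this gives, for every Taylor coefficient $a_k\neq0$ of $g-1$,
\[
\sum_{i<j}\Gamma_i\Gamma_j\,\sigma_{ij}^k=0 .
\]
Together with the conservation of $H$, each of these power sums must vanish.

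The main obstacle is to exploit this infinite family of moment conditions. First I would try to show that they force the $\sigma_{ij}$ to be equal, or otherwise degenerate, whenever infinitely many $a_k$ are nonzero. This must hold, since $g=\cosh\circ f^{-1}$ cannot be a polynomial in $s$ without contradicting the hyperbolic growth of $\cosh$, or I would handle that case separately. A Vandermonde or generating-function argument would do this: the vanishing of $\sum_{i<j}\Gamma_i\Gamma_j\, e^{\mu\sigma_{ij}}$ identically in $\mu$ requires the distinct $\sigma$ values to carry zero total weight. An equilateral triangle then gives $\sum\Gamma_i\Gamma_j=0$ with equal sides, and this should contradict the equations of motion. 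There the distance derivatives become $\propto A(\Gamma_k)(\ldots)$ with all differences zero, so the sides are stationary, contradicting collapse.
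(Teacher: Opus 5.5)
Your plane/sphere argument is essentially the paper's: collapse gives $L=0$ and $\sum_{i<j}\Gamma_i\Gamma_j=0$, after which both invariants are scale-invariant and pin down the shape. The paper phrases this with circumradius and angles, you with rays in $\ell^2$-space.

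The hyperbolic part, however, has a genuine gap, and it has two sources.

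\textbf{The Hamiltonian is wrong.} $\prod_{i<j}(\cosh d_{ij}-1)^{\Gamma_i\Gamma_j}$ is not the conserved quantity. The Green's function is $-\frac{1}{2\pi}\log\tanh(d/2)$, so the invariant is $\prod_{i<j}\left(\frac{\cosh d_{ij}-1}{\cosh d_{ij}+1}\right)^{\Gamma_i\Gamma_j}$. Your product depends only on $\sinh(d_{ij}/2)$. That is the same variable that appears in $M-\sum_{i<j}\Gamma_i\Gamma_j=2\sum_{i<j}\Gamma_i\Gamma_j\sinh^2(d_{ij}/2)$. So, at the level of conservation laws, your hyperbolic problem is formally identical to the spherical one, where self-similar collapse does exist. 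The obstruction on $\mathbb{H}^2$ comes precisely from the mismatch between $\tanh(d/2)$ in $H$ and $\sinh(d/2)$ in the momentum invariant.

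\textbf{The polynomial claim is false.} You assert that $g=\cosh\circ f^{-1}$ cannot be a polynomial. But only the germ at $0$ matters, and $f$ may itself grow exponentially. For example, $f(d)=\sinh(d/2)$ is analytic and increasing with $f'(0)=1/2$, and it gives $g(s)=1+2s^2$.

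For this $f$ (the Lorentzian chord length up to a factor), $M$ yields only $\sum_{i<j}\Gamma_i\Gamma_j\sigma_{ij}^2=0$, and your $H$ yields only $\sum_{i<j}\Gamma_i\Gamma_j=0$. These two linear conditions on $(\Gamma_1\Gamma_2,\Gamma_1\Gamma_3,\Gamma_2\Gamma_3)$ have nontrivial solutions; they are exactly the planar collapse conditions. So your argument ends without a contradiction.

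\textbf{The missing step} is to extract moment conditions from the correct Hamiltonian as well. With $s=\sinh(d/2)$ one has $\tanh^2(d/2)=s^2/(1+s^2)$. The $\lambda^4$ coefficient of $\sum_{i<j}\Gamma_i\Gamma_j\log(1+\lambda^2\sigma_{ij}^2)$ then gives $\sum_{i<j}\Gamma_i\Gamma_j\sigma_{ij}^4=0$. This is exactly the case the paper isolates, $h^2/(1-h^2)=b_2x^2$, where the coefficient $c_3=b_2^2$ of $h'/h$ supplies the exponent $4$.

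You also do not need infinitely many moments or the generating function $\sum\Gamma_i\Gamma_j e^{\mu\sigma_{ij}}$. The exponents $0$ and $2$ already rule out exactly two equal sides. Any third exponent $q\neq 0,2$ gives a nonsingular generalized Vandermonde system when the $\sigma_{ij}>0$ are distinct. So the only case needing extra input is when $g-1$ is a pure quadratic, i.e.\ $s\propto\sinh(d/2)$.

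Finally, assuming $f'(0)\neq0$ excludes cases such as $f(d)=d^2$. This is easily repaired by replacing $f$ with $f^{1/m}$, as the paper does.
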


 We remark that Vil\`{a}s \cite{simo2022n} proved a similar result for the real projective plane, namely that there is no self similar collapse (in terms of the Euclidean chord distance inherited from the standard representation of $\mathbb R\mathbb P^2$ as $S^2/\mathbb Z_2$) of three vortices in $\mathbb{R}\mathbb{P}^2$.
This result, along with ours for the hyperbolic plane, disentangles the issue of integrability of vortex motion (which holds in all cases for three vortices \cite{modin2020integrability}) and self-similar collapse.

This, however, does not rule out collapse on surfaces in general, i.e., it is possible that solutions that collapse in finite time do exist on arbitrary surfaces. Recent work has shown that self-similar collapses (and bursts) are robust with respect to sufficiently regular perturbations \cite{grotto2022burst}. Considering three point vortices on a surface locally, the motion is essentially planar with higher order corrections, and with the aforementioned work in hand, our second main result is the existence of collapsing configurations on a closed surface $S$. 

Even though collapse will not, in general, be self-similar, the motion will asymptotically approach self-similarity. With this motivation, the work of Vil\`{a}s \cite{simo2022n} introduced the notion of asymptotically (or, near) self-similar collapse:
\begin{definition}[Asymptotically self-similar collapse]
We say that a three point vortex system undergoes asymptotically self-similar collapse in terms of a variable $\ell_{ij}$ if all sides decay at the same rate as the time approaches the time of collapse, $t_c$. That is,
\begin{equation}
\frac{\dot{\ell}_{ij}/\ell_{ij}}{\dot{\ell}_{km}/\ell_{km}} \rightarrow 1, \qquad \text{as}~ t\to t_c. 
\end{equation}
\label{def.near-self-similar}
\end{definition}
Hence, our second main result is summarized in the following theorem. 

\begin{thm}\label{prop:main}
Let $S$ be a surface embedded in $\mathbb R^3$ and fix $p\in S$. Let $(\Gamma_1, \Gamma_2, \Gamma_3)$ admit a planar self-similar collapse. Then, there exists $\varepsilon_0>0$ such that for every $0 < \varepsilon < \varepsilon_0$ there are initial positions $z_i(0) \in S$ with $d_S(z_i(0), p) < \varepsilon$ for $i = 1, 2, 3$ so that the solution of the point vortex equations on $S$ with these initial data collapses to $p$ in an asymptotically self-similar fashion. 
\end{thm}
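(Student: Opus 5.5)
We work in isothermal coordinates centred at $p$. Near $p$ the surface admits a conformal chart $z\in B_r(0)\subset\mathbb C$ with $p\mapsto 0$ and metric $e^{2\phi(z)}|dz|^2$. We normalize $\phi(0)=0$ and $\nabla\phi(0)=0$; the gradient can be killed by a Möbius-type change of chart, or else left in as a lower-order term.

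In this chart the point vortex equations on $S$ take the form
\[
\dot{\bar z}_i=\frac{e^{-2\phi(z_i)}}{2\pi i}\Big(\sum_{j\neq i}\frac{\Gamma_j}{z_i-z_j}+\Gamma_i\,\partial_z R(z_i)\Big)+\sum_{j\ne i}\Gamma_j\, e^{-2\phi(z_i)}\partial_{z_i} h(z_i,z_j).
\]
Here $R$ is the Robin function and $h$ is the regular part of the Green's function, both smooth near the diagonal. Equivalently, this is a planar vortex system plus a perturbation $F_i(z)$ that is smooth near the origin.

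The first step is to blow up. Set $z_i=\varepsilon w_i$ and rescale time by $t=\varepsilon^2 s$. The system becomes
\[
\frac{dw}{ds}=V_{\mathrm{plane}}(w)+\varepsilon\, G_\varepsilon(w),
\]
where $V_{\mathrm{plane}}$ is the Euclidean three-vortex field and $G_\varepsilon$ is $C^1$-bounded uniformly for $|w|\lesssim 1$. The $e^{-2\phi}$ factor contributes $1+O(\varepsilon^2|w|^2)$, and the smooth parts contribute $O(\varepsilon)$ after rescaling.

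The key point, which I would check carefully, is that the perturbation is small relative to the planar field \emph{at the current scale}. The planar field scales like $L^{-1}$ when the configuration has size $L$. The corrections are $O(1)$ in absolute terms, or $O(L)$ after subtracting the constant drift. A constant drift $\partial R(0)$ only translates the collapse point, so we put $\nabla R(0)=0$ into the normalization or absorb it.

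This makes the problem exactly of the type handled by the robustness result of \cite{grotto2022burst}. Self-similar planar collapse is governed by a nondegenerate structure, and sufficiently regular perturbations vanishing suitably at the collapse point still admit collapsing solutions. I would invoke that theorem directly, or reprove the needed part as follows.

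The reprove goes in self-similar variables. Write $w_i(s)=\lambda(s)\,\big(\zeta_i(\tau)\big)$, where $\lambda$ is the scale and $\zeta$ is normalized shape data (e.g. $\sum|\zeta_i|^2=1$, with the centre of vorticity removed). Choose the new time $d\tau=ds/\lambda^2$, so that $\lambda$ satisfies $\lambda'/\lambda=a(\zeta)+O(\lambda)$. The planar self-similar collapse is then a fixed point, or relative equilibrium modulo rotation, $\zeta_*$ of the shape dynamics, with $a(\zeta_*)<0$. The perturbation enters the shape equation only as $O(\lambda)$, and $\lambda$ decays exponentially in $\tau$.

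The planar collapsing shape is not hyperbolic, since conserved quantities give neutral directions. We handle this using the collapse conditions $\sum\Gamma_i\Gamma_j=0$ and $M=0$: the collapse set is a codimension-type manifold, and we need to hit its perturbed analogue. This is the main obstacle. I would first try a shooting/continuity argument. Parametrize the initial shape near $\zeta_*$ by the value of the planar invariant $M$, which is preserved by the planar flow but drifts under the perturbation. Then use the intermediate value theorem: for shapes with $M$ slightly positive the vortices expand (burst away) at scale $\varepsilon$, while for $M$ slightly negative they expand the other way. By continuity some initial datum has $\lambda\to0$. Because the $O(\lambda)$ forcing is integrable in $\tau$, the total drift of $M$ is $O(\varepsilon)$, so the intermediate value argument can be closed.

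Given collapse, $\zeta(\tau)\to$ a point on the orbit of $\zeta_*$, again because the forcing is integrable. Hence all ratios $\ell_{ij}/\ell_{km}$ converge to their self-similar values, and $\dot\ell_{ij}/\ell_{ij}=\lambda'/\lambda+O(\zeta')$ has common leading term $a(\zeta_*)/\lambda^2$. This gives the ratio limit $1$ of Definition~\ref{def.near-self-similar}, for Euclidean chart distance and hence also for $d_S$, since $d_S=|z_i-z_j|(1+O(\varepsilon))$.

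Finally, the collapse point is within $O(\varepsilon^2)$ of the chart origin. Translating the initial data by this amount, again by continuity, makes the vortices collapse exactly to $p$, and the initial distances to $p$ are $<\varepsilon$ once $\varepsilon_0$ is small.
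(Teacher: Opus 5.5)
Your backbone is the paper's route: a chart at $p$, the parabolic blow-up $z=\varepsilon w$, $t=\varepsilon^2 s$, the observation that the surface corrections are $O(1)$ in absolute size and so $O(\varepsilon)$ after rescaling on a nondegenerate cone, and an appeal to Lemma~\ref{thm:prop-2.3-from-paper}. You use isothermal coordinates where the paper uses normal coordinates and bounds $\mathcal E_i$ on $\mathcal C_\kappa$; that difference is cosmetic.

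You genuinely depart from the paper in the self-contained argument in self-similar variables. There, the curve $\{\hat L=0\}$ near $\zeta_*$ is a curve of fixed shapes modulo rotation, and $\hat L=L/\lambda^2$ is the single unstable direction transverse to it, since it grows like $\lambda^{-2}$ under the planar flow. The forcing is $O(\varepsilon\lambda)$, hence integrable in $\tau$, and shooting in $\hat L_0$ produces a collapsing solution.

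This route buys two things the paper's does not have:
\begin{enumerate}
\item The lemma of \cite{grotto2022burst}, as quoted, treats a single external field $f(t,z_j)$ in $E_T$. The surface corrections (metric factor on the interaction, $\nabla h(z_i,z_j)$, the $\Gamma_i$-weighted Robin gradient) depend on the whole configuration and differ from vortex to vortex. Your shooting argument covers these; the lemma does not literally do so, and the paper glosses over this.
\item Convergence of the shape gives an actual proof of the ratio limit in Definition~\ref{def.near-self-similar}. The paper only bounds the error in $dq_{ij}^2/d\tau$ by $O(\varepsilon|q_{ij}|)$.
\end{enumerate}
Two small corrections. The limit shape is a nearby point of $\{\hat L=0\}$ with slightly different $H$, not necessarily on the rotation orbit of $\zeta_*$; this is harmless. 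Also keep your $M$, which you use for $\sum\Gamma_i\Gamma_j\ell_{ij}^2$, separate from $\sum\Gamma_i\Gamma_j$.

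What the paper's route gets for free, and yours does not, is the location of the collapse point. The burst of \cite{grotto2022burst} satisfies $z_j(t)\to 0$ as $t\to 0$, so after time reversal the vortices collapse exactly at $p$. Your forward shooting only gives collapse within $O(\varepsilon^2)$ of $p$.

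Your final step, ``translating the initial data by this amount, again by continuity,'' does not follow as stated. The intermediate value theorem selects \emph{some} collapsing $\hat L_0$ for each translation $a$, with no continuous dependence on $a$. Translating also changes the perturbation, so the collapsing $\hat L_0$ moves. Hence there is no continuous map from $a$ to the collapse point to which Brouwer could be applied.

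You can repair this in either of two ways:
\begin{itemize}
\item Prove uniqueness of the collapsing $\hat L_0$ by a Lyapunov--Perron contraction in the unstable direction, which is available because the forcing decays exponentially in $\tau$. This makes the collapse set a continuous graph and the collapse point continuous in $a$, after which Brouwer in $a\in\mathbb R^2$ applies.
\item Work in self-similar variables centred at $p$ itself, where the two translation modes become additional unstable directions. Then run a single three-parameter Wa\.{z}ewski (no-retraction) argument in place of the one-parameter intermediate value argument.
\end{itemize}
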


\begin{remark}
As will be seen from the proof of the theorem below, the existence of collapse hinges on local data. As such, it is sufficient to assume that $S$ is a Riemannian surface, but for the sake of concreteness, we treat embedded surfaces.  Moreover, if $S$ is a surface of higher genus ($\mathfrak g\geq 1)$, collapsing configurations will still exist. For $\mathfrak g\geq 1$, the equations of motion must be modified to accommodate a non-trivial irrotational part of the velocity \cite{GustafssonB}. However, the procedure of proof of the proposition remains the same, as for sufficiently small neighborhoods on $S$, the irrotational part will be perturbative and  the  conclusion holds. Collapse on surfaces with boundary follow similarly. 
\end{remark}

Let us now review briefly the point vortex model, collapsing solutions in the Euclidean plane, sphere and hyperbolic plane, and point vortices on surfaces.


\section{Point vortex model}

\subsection{Point vortices on the plane}
Point vortices are a description of 2D turbulence by representing the vorticity profile as being concentrated in finitely many points:
\begin{equation}
     \omega(x,t) = \sum_{i}\Gamma_i  \delta_{z_i(t)}(x),
\end{equation} where $z_i \in \mathbb{R}^2,$  denotes the $i^{th}$ vortex with corresponding circulation $\Gamma_i\in \mathbb R.$  As mentioned previously, the point vortex dynamics form a Hamiltonian system with Hamiltonian function
\begin{equation}\label{eq:hamiltonian}
    H = - \frac{1}{4\pi} \sum_{i\neq j} \Gamma_i \Gamma_j \log|z_i - z_j|.
\end{equation}
Here, $|\cdot|$ is the Euclidean distance in $\mathbb R^2$. The system also has two other first integrals of motion, the linear and angular momenta:
\begin{align*}
    P &\equiv \sum_i \Gamma_i z_i(t) = \mathrm{const.}, \\
    J &\equiv \sum_i \Gamma_i z_i^2(t) = \mathrm{const.}
\end{align*}
For three vortices, a  useful conserved quantity $L$ depending only on relative distances may be formed from a linear combination of $P$ and $J$:
\begin{equation*}
    L = \sum_{i<j} \Gamma_i  \Gamma_j \,\ell_{ij}^2,
\end{equation*}
where $\ell_{ij} = |z_i - z_j|$ is the Euclidean distance between the vortices. 
Three point vortex collapse  can be completely characterized in terms of this invariant $L$  \cite{Krishnamurthy}. Namely,
\begin{lem}\label{thm:plane-collapse}
The necessary and sufficient conditions for collapse are that $L = 0$ and
\begin{equation*}
    \Gamma_1 \Gamma_2 + \Gamma_2 \Gamma_3 + \Gamma_3 \Gamma_1 = 0.
\end{equation*}
Furthermore, these conditions force the collapse to be self-similar. 
\end{lem}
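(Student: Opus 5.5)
The plan is to prove necessity through the conserved quantities, and then obtain self-similarity and sufficiency from the explicit evolution equations for the side lengths.

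\emph{Necessity.} Suppose the vortices collapse at a time $t_c<\infty$, so that $\ell_{ij}(t)\to 0$ for all pairs.

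First, $L$ is conserved and depends continuously on the $\ell_{ij}$. Letting $t\to t_c$ gives $L=\lim_{t\to t_c} L(t)=0$.

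Second, we use $P$ and $J$ together. If $\Gamma:=\Gamma_1+\Gamma_2+\Gamma_3\neq 0$, the center of vorticity $P/\Gamma$ is fixed. The identity
\[
\Gamma\, J - |P|^2 = \sum_{i<j}\Gamma_i\Gamma_j\,\ell_{ij}^2 = L
\]
shows that $J=|P|^2/\Gamma$, so $J$ is determined by the location of the collapse point. This relation alone does not produce the second condition. The extra input comes from the Hamiltonian. The quantity
\[
e^{-4\pi H} = \prod_{i<j}\ell_{ij}^{2\Gamma_i\Gamma_j}
\]
is conserved, so $\prod_{i<j}\ell_{ij}^{\Gamma_i\Gamma_j}$ is a positive constant $C$.

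Now suppose the collapse is self-similar, $\ell_{ij}=\lambda(t)\,c_{ij}$ with $\lambda\to 0$. Then $C=\lambda^{\sum\Gamma_i\Gamma_j}\prod c_{ij}^{\Gamma_i\Gamma_j}$. Since $C\neq 0$, this forces $\sum_{i<j}\Gamma_i\Gamma_j=0$. Hence it suffices to show that collapse forces self-similarity first, and then read off the second condition.

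\emph{Key step: the shape equations.} Use the Gröbli equations,
\[
\frac{d}{dt}\ell_{12}^2=\frac{2}{\pi}\,\Gamma_3\,\sigma A\left(\frac{1}{\ell_{23}^2}-\frac{1}{\ell_{13}^2}\right),
\]
together with their cyclic permutations. Here $A$ is the triangle area and $\sigma=\pm1$ is the orientation.

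From these equations I would form a conserved ratio. The standard route goes through the trilinear / Synge coordinates. One checks that $\sum_{i<j}\Gamma_i\Gamma_j\,\ell_{ij}^2$ and $H$ are the only constraints. On the level set $L=0$, the scale-invariant quantities, namely the ratios $\ell_{12}^2:\ell_{23}^2:\ell_{13}^2$, evolve by a one-dimensional flow. That flow has the Hamiltonian restricted to the projective line as a first integral.

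If $\sum\Gamma_i\Gamma_j\neq 0$, then $\prod\ell_{ij}^{\Gamma_i\Gamma_j}$ is a homogeneous function of nonzero degree that is conserved. This prevents the scale from tending to $0$ unless the shape degenerates to a point where some $\ell_{ij}$ vanishes or blows up relative to the others. I would rule that out because the ratios stay on a compact curve away from the boundary by continuity of the shape flow.

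\emph{Main obstacle.} The hardest step is proving that the shape ratios are exactly constant when $L=0$ and $\sum\Gamma_i\Gamma_j=0$. My first attempt would be direct. Substitute $\ell_{ij}^2=\lambda^2 c_{ij}$ into the Gröbli system and check that $\frac{d}{dt}\ln\ell_{ij}^2$ is independent of $(i,j)$ precisely when $\sum_{i<j}\Gamma_i\Gamma_j c_{ij}=0$ and the harmonic-mean condition on the $\Gamma$'s holds. Concretely, the common value of $\Gamma_k\bigl(\ell_{ik}^{-2}-\ell_{jk}^{-2}\bigr)/\ell_{ij}^2$ should hold identically on $L=0$ when $\sum\Gamma_i\Gamma_j=0$. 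If that works, uniqueness of ODE solutions shows the shape is frozen. It then gives $\dot\lambda^2=$ const, i.e.\ $\lambda^2=\lambda_0^2-\kappa t$ with $\kappa\neq0$ for a non-equilibrium triangle, so collapse occurs in finite time whenever $\sigma$ has the right sign.

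This last computation also gives \emph{sufficiency}: under the two conditions the triangle collapses, and it does so self-similarly, which completes the lemma.
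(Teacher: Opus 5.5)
There is a genuine gap in the necessity of $M=\Gamma_1\Gamma_2+\Gamma_2\Gamma_3+\Gamma_3\Gamma_1=0$.

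\textbf{Your first route is circular.} You write ``it suffices to show that collapse forces self-similarity first,'' but the only self-similarity argument you give is the Gr\"obli computation, which assumes $M=0$ and $L=0$. In fact that computation shows that a non-equilateral, non-collinear triangle has equal rates $\frac{d}{dt}\log\ell_{ij}^2$ only if $M=0$. So ``collapse forces self-similarity'' is no easier than ``collapse forces $M=0$.''

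\textbf{Your second route needs a missing ingredient.} It is the homogeneity idea the paper also uses, phrased there via the circumradius and angles: if $M\neq0$, conservation of $\prod\ell_{ij}^{\Gamma_i\Gamma_j}$ forces the shape to degenerate. The step meant to exclude degeneration, ``the ratios stay on a compact curve away from the boundary by continuity of the shape flow,'' is not an argument, since a continuous one-dimensional flow can run into an endpoint of its interval. As a claim it is also false in general. If $\Gamma_i+\Gamma_j=0$ for some pair, the line $\{L=0\}$ in shape space passes through the degenerate shape $\ell_{ij}=0$, $\ell_{ik}=\ell_{jk}$. The shape does approach it, for example when the $\pm$ pair escapes to infinity as a dipole. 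The paper's own version of this step (an angle tending to $0$ or $\pi$ forces $R\to\infty$) is no more complete, because the sides shrink at the same time.

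\textbf{How to close it.} What is missing is the use of $L=0$ together with the triangle inequality. If $\ell_{ij}/\ell_{ik}\to0$, then $\ell_{jk}/\ell_{ik}\to1$, and dividing $L=0$ by $\ell_{ik}^2$ gives $\Gamma_k(\Gamma_i+\Gamma_j)=0$ in the limit. Two cases follow.

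\emph{No two circulations sum to zero.} The admissible shapes with $L=0$ form a compact set avoiding every $\{\ell_{ij}=0\}$. On it, the scale-free factor of $\prod\ell_{ij}^{\Gamma_i\Gamma_j}$ is bounded above and below. The invariant is therefore comparable to $s^{M}$ with $s=\max\ell_{ij}$, so $s\to0$ is impossible when $M\neq0$.

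\emph{Some pair sums to zero, say $\Gamma_1=-\Gamma_2=\gamma$.} The invariant becomes $(\ell_{13}/\ell_{23})^{\gamma\Gamma_3}\ell_{12}^{-\gamma^2}$, so collapse forces $(\ell_{13}/\ell_{23})^{\gamma\Gamma_3}\to0$. By the same limit argument, $\ell_{13}/\ell_{23}\to0$ requires $\Gamma_3=-\gamma$, and $\ell_{13}/\ell_{23}\to\infty$ requires $\Gamma_3=\gamma$. In each case $\gamma\Gamma_3$ has the wrong sign, a contradiction.

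\textbf{The rest is correct, and sharper than the paper.} The paper counts ``three equations for three angles,'' which tacitly assumes the level sets of $H$ and $L$ meet transversally. Your Gr\"obli computation is exact instead. For $A\neq0$, equal rates are equivalent to $\Gamma_3(\ell_{13}^2-\ell_{23}^2)=\Gamma_1(\ell_{12}^2-\ell_{13}^2)=\Gamma_2(\ell_{23}^2-\ell_{12}^2)$. When $M=0$, this reduces to the single relation $\frac{\Gamma_1+\Gamma_3}{\Gamma_1\Gamma_3}L=0$, so the shape is frozen on all of $\{L=0\}$. It also gives $\frac{d}{dt}\lambda^2=\mathrm{const}$ and hence sufficiency. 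You correctly add the non-equilibrium proviso, which the paper's planar statement omits.
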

\begin{proof} We include the proof, which follows \cite{Krishnamurthy}, for the sake of completeness.
Let us rewrite the point vortex triangle in terms of the radius of the circumcircle, $R(t)$, and the interior angles $\alpha + \beta + \gamma = \pi$. We may write 
\begin{equation*}
    L = R^2 \left(\frac{\sin^2\alpha}{\Gamma_1} + \frac{\sin^2 \beta}{\Gamma_2} + \frac{\sin^2 \gamma}{\Gamma_3} \right). 
\end{equation*}
Since we assume that collapse occurs at some finite time, there exists a time $t_c$ such that $R(t_c) = 0$. It then follows that $L(t_c) = 0$, and by its invariance, $L = 0$. Hence, 
\begin{equation}\label{eq:cond1}
    \frac{\sin^2\alpha}{\Gamma_1} + \frac{\sin^2 \beta}{\Gamma_2} + \frac{\sin^2 \gamma}{\Gamma_3} = 0.
\end{equation}
Let us call $M = \Gamma_1\Gamma_2 + \Gamma_2 \Gamma_3 + \Gamma_3 \Gamma_1$. We may rewrite the Hamiltonian as
\begin{equation*}
    e^{-2\pi H/\Gamma_1 \Gamma_2 \Gamma_3} = \left(\frac{R(t)}{R(0)}\right)^{M/\Gamma_1\Gamma_2\Gamma_3} (\sin\alpha)^{1/\Gamma_1}(\sin\beta)^{1/\Gamma_2} (\sin \gamma)^{1/\Gamma_3}.
\end{equation*}
Assume for the sake of contradiction that $M \neq 0$. As $t\to t_c$, $R(t) \to 0$. Hence, for the quantity to remain constant and positive, it would require at least one angle to approach $0$ or $\pi$, forcing the triangle to become collinear. However, in this case, $R \to \infty$, a contradiction. Hence, $M = 0$.

With these conditions in hand, we may rewrite 
\begin{equation}\label{eq:cond2}
    H = - \frac{\Gamma_1 \Gamma_2\Gamma_3}{2\pi}\left(\frac{\log \sin \alpha}{\Gamma_1} + \frac{\log \sin \beta}{\Gamma_2} + \frac{\log \sin \gamma}{\Gamma_3} \right).
\end{equation}
Now, we have three equations for the three angles: they sum to $\pi$,  (\ref{eq:cond1}), and (\ref{eq:cond2}). Three independent equations for three variables, hence we get that $\alpha, \beta, \gamma$ are constant. Hence, the motion is self-similar; i.e. there exists a common scaling $\lambda(t)$ such that 
\begin{equation}
    \ell_{ij}(t) = \lambda(t) \ell_{ij}(0).
\end{equation}
At $t=0$, of course, $\lambda(0) =1$. For finite-time collapse to occur, there must exist a finite time $t_c$ where $\lambda(t_c) = 0$.   This is observed to occur when the stated conditions are met.
\end{proof}

\begin{remark}
    We note that the collapse of two vortices cannot occur: this is easy to see from the conserved quantity $L.$ Particularly, if collapse occurs, then $L = \Gamma_1 \Gamma_2 \ell _{12}(t)^2 = 0$ for all $t$. However, this would imply that at least one of the circulations must be zero.
\end{remark}

\subsection{Motion of point vortices on surfaces}
Here, we generalize the motion in the above subsection to surfaces. The Hamiltonian of the point vortex system is given in terms of the Green's function of the Laplace-Beltrami operator, 
\begin{equation}
    H(z_1, \ldots, z_n) = \frac{1}{2}\sum_{i \neq j}\Gamma_i \Gamma_j G_S(z_i, z_j) + \frac{1}{2}\sum_{\ell} \Gamma_\ell^2 R_S(z_\ell, z_\ell),
\end{equation}
where $R_S(\cdot, \cdot)$ represents the Robin function, the regular part of the Green's function:
\begin{equation}
    R_S(p, q) = 
    \begin{cases}
        G_S(p, q) + \log (d_S(p, q)/2\pi), \quad &p \neq q  \\
        \lim_{p \to q} [G_S(p, q) + \log (d_S(p, q)/2\pi)], &p = q.
    \end{cases}
\end{equation}
Here, $d_S(\cdot, \cdot)$ is the geodesic distance on surface $S$. Let us introduce the notation that if a function depends on all vortex coordinates, we shall write $f(z_1, \ldots, z_n) \equiv f(z)$. The equations of motion corresponding to our Hamiltonian are given by
\begin{equation}
    \Gamma_i\dot z_i = J(z_i)\widetilde\nabla_{z_i} H(z),
\end{equation}
where $\widetilde \nabla$ is the covariant derivative associated with the induced metric, and $J$ represents the almost complex structure on $S$ (rotation by ninety degrees in the tangent plane).

The above Hamiltonian is true for closed surfaces of genus $\mathfrak g = 0$ (so that there are no non-trivial harmonic vector fields). For surfaces of higher genus or non-compact surfaces, modifications must be made. For example, recent work has derived the equations of motion for point vortices on a torus \cite{sakajo2016}, and the recent work \cite{grotta2024interplay} gives a comprehensive review on how the vortex system couples to evolving potential flow. Our proof of the existence of collapsing configurations deals with surfaces of genus zero, but can be naturally extended to surfaces for $\mathfrak g \geq 1$ since it is, in essence, perturbative.  Similarly, on surfaces with boundary, the effect of the Robin function is perturbative in the interior, see e.g. \cite{drivas2025pensive}, and the collapse will again follow.

Now that we have reviewed some of the existing theory for point vortices, we proceed to our discussion on the self-similar motion of point vortices on surfaces of constant curvature.

\section{Surfaces of constant curvature}
We consider here surfaces of constant curvature: (i) of constant positive curvature, the unit sphere $\mathbb{S}^2$, and (ii) constant negative curvature, the hyperbolic plane $\mathbb{H}^2$. The Green's functions on these surfaces are given by \cite{kimura1999vortexConst}
\begin{align}\label{eq:green-sphere}
    G_{\mathbb{S}^2}(p, q) &= - \frac{1}{2\pi}\log \left(\sin \frac{d_{\mathbb{S}^2}(p, q)}{2} \right), \quad \mathrm{for} ~p, q \in \mathbb{S}^2, \\
    G_{\mathbb{H}^2}(p, q) &= - \frac{1}{2\pi}\log \left(\tanh \frac{d_{\mathbb{H}^2}(p, q)}{2} \right), \quad \mathrm{for} ~p, q \in \mathbb{H}^2.
\end{align}
One may notice that the Green's functions on $\mathbb S^2$ and $\mathbb H^2$ are not directly analogous. Unlike the hyperbolic plane, $\mathbb S^2$ is a compact manifold without boundary, and hence there exists no distribution satisfying $\Delta_p G(p, q)= \delta_q(p)$. Rather, one defines the normalized Green's function by $\Delta_p G(p, q) = \delta_q(p) - 1/\mathrm{Area}(\mathbb S^2)$. As a result, one may add a uniform vorticity background to the Green's function without affecting the equations. Indeed, writing $G(d_{\mathbb S^2}) = -\frac{1}{2\pi}\log(\tan d_{\mathbb S^2}/2)$ differs only by such a background, and hence reproduces the same equation. On the other hand, $G(d_{\mathbb H^2}) = -\frac{1}{2\pi}\log(\tanh d_{\mathbb H^2}/2)$ is the unique choice on $\mathbb H^2$, since any (radially symmetric) harmonic addition must satisfy the prescribed behavior at infinity.

\subsection{The sphere}
Given that the explicit form of the Green's function is known, it is possible to simplify the form of the Hamiltonian. First, we may calculate the Robin function as 
\begin{equation*}
    R_{\mathbb{S}^2}(p, q) = - \frac{1}{2\pi}\log \left(\frac{\sin(d_{\mathbb{S}^2}(p, q)/2)}{d_{\mathbb{S}^2}(p, q)} \right).
\end{equation*}
The diagonal term is evaluated as the limit
\begin{equation*}
    R_{\mathbb{S}^2}(q, q) = \lim_{p \to q} R_{\mathbb{S}^2}(p, q) = \frac{1}{2\pi}\log 2.
\end{equation*}
Hence, we arrive at
\begin{equation}
    H_{\mathbb{S}^2}(z) = - \frac{1}{2\pi}\sum_{i<j} \Gamma_i \Gamma_j \log\left(\sin \frac{d_S(z_i, z_j)}{2}\right) + \frac{\log 2}{4\pi}\sum_{\ell = 1}^N \Gamma_\ell^2. 
\end{equation}
Note that the argument of the logarithm in the above equation is proportional to the chord length between two points on the sphere, $\ell_{ij}$, with which we rewrite 
\begin{equation}
    H_{\mathbb{S}^2}(z) = -\frac{1}{2\pi} \sum_{i<j}\Gamma_j \Gamma_j \log \ell_{ij} + \mathrm{const.},
\end{equation}
where the constant does not affect the dynamics. Correspondingly, the equations of motion are 
\begin{equation}
    \dot z_i = \frac{1}{2\pi}\sum_{j \neq i} \Gamma_j \frac{z_i \times z_j}{|z_i - z_j|^2} = \frac{1}{4\pi} \sum_{j \neq i} \Gamma_j \frac{z_i\times z_j}{1 - z_i \cdot z_j},
    \label{eq: eom for sphere}
\end{equation}
which reproduces the equations given in \cite{kidambi1998sphere}. We may also write equations for length,
\begin{equation}
    \frac{d\ell_{ij}^2}{dt} = \frac{\Gamma_k V}{\pi R}\left(\frac{1}{\ell_{jk}^2} - \frac{1}{\ell_{ki}^2} \right),
\end{equation}
where $V$ is the volume of the parallelepiped formed by the vectors.  

The motion of point vortices on a sphere was studied extensively by Kidambi and Newton \cite{kidambi1998sphere}, including the case of three point vortex collapse. In particular, they prove the following theorem with regard to the necessary and sufficient conditions for self-similar collapse on the sphere:
\begin{lem}[Kidambi and Newton \cite{kidambi1998sphere}]
The necessary and sufficient conditions for self-similar collapse on a sphere in terms of the chord length $\ell_{ij}$ are 
\begin{align}\label{eq:L-sphere}
    L = \sum_{i<j} \Gamma_i \Gamma_j \ell_{ij}^2 = 0 \qquad  \text{and} \qquad
    \sum_i \frac{1}{\Gamma_i} = 0, 
\end{align}
and that the vortices do not form an equilibrium. 
\end{lem}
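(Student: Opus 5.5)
The plan is to follow the proof of Lemma~\ref{thm:plane-collapse} almost verbatim, replacing Euclidean distances by chord lengths $\ell_{ij}=|z_i-z_j|$ of the embedded unit sphere. Two facts make this possible. First, the Hamiltonian $H_{\mathbb S^2}=-\frac{1}{2\pi}\sum_{i<j}\Gamma_i\Gamma_j\log\ell_{ij}+\mathrm{const}$ has exactly the planar form in the $\ell_{ij}$. Second, the quantity $L=\sum_{i<j}\Gamma_i\Gamma_j\ell_{ij}^2$ is conserved. To see the second, I would compute $\ell_{ij}^2=2-2z_i\cdot z_j$, so that
\[
L=\Bigl(\sum_i\Gamma_i\Bigr)^2-\Bigl|\sum_i\Gamma_i z_i\Bigr|^2 .
\]
The vector $\sum_i\Gamma_i z_i$ is conserved, since it is the moment of vorticity associated with rotational symmetry; one can check this directly from \eqref{eq: eom for sphere} by antisymmetry of the cross product. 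Hence $L$ is constant. The chord triangle is a genuine Euclidean triangle in $\mathbb R^3$, so I can introduce its circumradius $R(t)$ and interior angles $\alpha+\beta+\gamma=\pi$, with $\ell_{23}=2R\sin\alpha$ and so on.

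\emph{Necessity.} Suppose the vortices collapse at time $t_c$. Then every $\ell_{ij}\to0$, so $L=0$ by invariance. The strategy is then to show that the angles are forced to be constant.
\begin{enumerate}
\item Writing $L=4R^2(\Gamma_1\Gamma_2\sin^2\gamma+\cdots)$, the condition $L=0$ gives a relation among the angles, as in \eqref{eq:cond1}.
\item Exponentiating $H$ gives $R^{M}(\sin\alpha)^{\Gamma_2\Gamma_3}(\sin\beta)^{\Gamma_3\Gamma_1}(\sin\gamma)^{\Gamma_1\Gamma_2}=\mathrm{const}>0$, with $M=\Gamma_1\Gamma_2+\Gamma_2\Gamma_3+\Gamma_3\Gamma_1$.
\item If $M\neq0$, then $R\to0$ forces some $\sin\to0$ or $\infty$. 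This degenerates the triangle to a collinear one, making $R\to\infty$, a contradiction. Hence $M=0$, i.e.\ $\sum_i1/\Gamma_i=0$.
\item With $M=0$, $H$ becomes a function of the angles alone. Together with $L=0$ and $\alpha+\beta+\gamma=\pi$, this gives three equations fixing the angles.
\end{enumerate}
Constant angles mean $\ell_{ij}(t)=\lambda(t)\ell_{ij}(0)$, so the collapse is self-similar in chord length. The configuration cannot be an equilibrium, since it collapses.

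\emph{Sufficiency.} Assume both conditions hold and the configuration is not an equilibrium. The argument above shows the angles stay constant, so $\ell_{ij}=\lambda(t)\ell_{ij}(0)$. It remains to show that $\lambda$ reaches $0$ in finite time. For this I would use the length equation
\[
\frac{d\ell_{ij}^2}{dt}=\frac{\Gamma_kV}{\pi R}\Bigl(\frac1{\ell_{jk}^2}-\frac1{\ell_{ki}^2}\Bigr),
\]
where $R$ here is the sphere radius. The volume $V$ of the parallelepiped spanned by $z_1,z_2,z_3$ is the height of the origin above the chord plane times twice the triangle area. For a small triangle that height is close to $1$, so $V\sim c\lambda^2$. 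The ODE then gives $\frac{d}{dt}\lambda^2\approx C\cdot\lambda^2\cdot\lambda^{-2}=C$, i.e.\ $\lambda^2$ is approximately linear in time. If $C\neq0$ (non-equilibrium) and has the right sign (otherwise reverse time, which gives a burst), then $\lambda^2$ reaches $0$ in finite time.

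\emph{Expected obstacle.} Making the last step rigorous is the delicate part. The height factor $\sqrt{1-\rho^2}$, with $\rho$ the circumradius of the chord triangle, depends on $\lambda$, so $\frac{d\lambda^2}{dt}=C\sqrt{1-c'\lambda^2}$. This is bounded away from zero and of fixed sign as long as $\lambda\le1$ and the triangle is non-collinear, which gives finite-time collapse. I would also need to check that the three angle equations in step~4 are independent, i.e.\ that the level sets are isolated points rather than curves. I would handle this exactly as in the planar case, since the equations are literally the planar ones.
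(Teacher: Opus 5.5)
Your strategy is the paper's: port the planar argument of Lemma~\ref{thm:plane-collapse} to chord lengths, using that $H$ and $L$ have the planar form in the $\ell_{ij}$. Your derivation $L=(\sum_i\Gamma_i)^2-|\sum_i\Gamma_i z_i|^2$ goes beyond the paper, as does your sufficiency argument via $V=2A\sqrt{1-\rho^2}$ (the paper only cites Kidambi--Newton for sufficiency); both are fine. The genuine gap is step~3. It is taken over from the paper's planar proof, so the paper's ``verbatim'' transfer has the same hole.

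Collapse means $\ell_{ij}\to0$, not that the circumradius $R\to0$. A triangle whose vertices cluster on an arc of a fixed circle shrinks to a point with $R$ constant. Moreover, on the unit sphere the chord triangle always has $R\le1$, so ``$R\to\infty$'' can never supply a contradiction. Conversely, an angle tending to $0$ does not force $R\to\infty$. If one side is much shorter than the other two, the opposite angle tends to $0$ while $R$ is about half a long side, which itself tends to $0$. So, as written, nothing excludes $M\neq0$.

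The repair is to measure scale by $s=\max\ell_{ij}$ rather than $R$. Conservation of $H$ gives $M\log s+\sum_{i<j}\Gamma_i\Gamma_j\log(\ell_{ij}/s)=\mathrm{const}$. If $M\neq0$, the scale-invariant sum must diverge, so every limit point of the normalized shape as $t\to t_c$ has some $\ell_{ij}/s=0$. The triangle inequality then gives $\ell_{jk}/s,\ell_{ki}/s\to1$ along that sequence. Dividing $L=0$ by $s^2$ and passing to the limit yields $\Gamma_k(\Gamma_i+\Gamma_j)=0$, so $\Gamma_j=-\Gamma_i$. But then $H$ reads $-\Gamma_i^2\log\ell_{ij}+\Gamma_i\Gamma_k\log(\ell_{ki}/\ell_{jk})=\mathrm{const}$, whose first term tends to $+\infty$ while the second tends to $0$. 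This contradiction gives $M=0$.

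The independence check you flag in step~4 can also be avoided. For $(i,j,k)$ a cyclic permutation of $(1,2,3)$, the chord-length equations give $\frac{d}{dt}\log\ell_{ij}^2=\frac{V}{\pi\ell_{12}^2\ell_{23}^2\ell_{31}^2}\,\Gamma_k(\ell_{ki}^2-\ell_{jk}^2)$. The algebraic identity
\[
\Gamma_2\left[\Gamma_3(\ell_{31}^2-\ell_{23}^2)-\Gamma_1(\ell_{12}^2-\ell_{31}^2)\right]=M\ell_{31}^2-L,
\]
together with its cyclic versions, shows that the three factors $\Gamma_k(\ell_{ki}^2-\ell_{jk}^2)$ coincide whenever $L=M=0$. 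Thus all logarithmic rates agree and the ratios $\ell_{ij}/\ell_{km}$ are constant. This gives the self-similarity you need in both the necessity and the sufficiency directions.
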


While it is shown that these conditions are necessary and sufficient, it is also possible to demonstrate that collapse always occurs self-similarly, by analogy with the planar motion. It will be useful to note that the quantity $L$ given in (\ref{eq:L-sphere}) is conserved. We sketch the proof here.

\begin{proof}
The planar proof of Lemma \ref{thm:plane-collapse} relies only on the conserved quantities $L$ and $H$. On the sphere, these quantities have exactly the same dependence on the mutual distances when expressed in terms of chord lengths $\ell_{ij}$. Therefore, the planar argument applies verbatim: if collapse occurs, then \(L=0\), and conservation of the Hamiltonian forces the ratios of the three chord lengths to remain constant. Hence, all three chord lengths shrink by a single common scale factor, so the collapse is self-similar.
\end{proof}

We next remark that this collapse, while it is self-similar in terms of chord lengths (an extrinsic concept), it is not so in terms of the intrinsic geodesic length:
\begin{cor}
Collapse on the sphere is not self-similar in terms of the geodesic length. 
\end{cor}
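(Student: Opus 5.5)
The plan is to compare the two length variables directly. On the unit sphere the chord length and the geodesic distance are related by
\begin{equation*}
\ell_{ij} = 2\sin\frac{d_{ij}}{2}, \qquad d_{ij} = d_{\mathbb S^2}(z_i,z_j).
\end{equation*}
By the preceding lemma and its proof, every collapse is self-similar in the chord lengths. Thus $\ell_{ij}(t)=\lambda(t)\ell_{ij}(0)$, with a single scale factor $\lambda(t)$ decreasing from $1$ to $0$ as $t\to t_c$. Geodesic self-similarity would require, in addition, $d_{ij}(t)=\mu(t)\,d_{ij}(0)$ for a common $\mu(t)$. Equivalently, the ratios $d_{ij}(t)/d_{km}(t)$ would have to stay constant in time. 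I will show that this is incompatible with chord self-similarity.

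First I use the two conservation conditions. Since $L=0$ and not all $\Gamma_i\Gamma_j$ have the same sign, the initial chord lengths cannot all be equal. Indeed, if $\ell_{12}=\ell_{23}=\ell_{31}=\ell$, then $L=\ell^2(\Gamma_1\Gamma_2+\Gamma_2\Gamma_3+\Gamma_3\Gamma_1)=\ell^2 M$. But $M\neq 0$ whenever $\sum 1/\Gamma_i=0$, because $\sum 1/\Gamma_i=0$ means $M=0\cdot\Gamma_1\Gamma_2\Gamma_3$ with the roles reversed. Let me recheck this carefully: in fact $\sum 1/\Gamma_i = M/(\Gamma_1\Gamma_2\Gamma_3)$, so $M=0$, and the equilateral triangle is not excluded by $L$ alone.

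I therefore use a different argument: the equilateral triangle is a relative equilibrium, which the lemma explicitly excludes. So, after relabelling, there exist two sides with $a=\ell_{12}(0)\neq b=\ell_{13}(0)$.

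Next I define $f(s)=2\arcsin(s/2)$, so that $d_{ij}=f(\ell_{ij})$. Consider the ratio
\begin{equation*}
r(\lambda)=\frac{f(\lambda a)}{f(\lambda b)}, \qquad \lambda\in(0,1].
\end{equation*}
As $\lambda\to0$ we have $r(\lambda)\to a/b$, because $f(s)=s+s^3/24+O(s^5)$. Expanding further gives
\begin{equation*}
r(\lambda)=\frac{a}{b}\Bigl(1+\frac{\lambda^2(a^2-b^2)}{24}+O(\lambda^4)\Bigr),
\end{equation*}
which is non-constant in $\lambda$ when $a\neq b$. Since $f$ is analytic, $r$ is analytic in $\lambda$, so it is nonconstant on every interval of $\lambda$ values. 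Alternatively, I can simply note that $r(1)=f(a)/f(b)\neq a/b=\lim_{\lambda\to0}r(\lambda)$. This follows because $f(s)/s$ is strictly increasing on $(0,2]$, and I expect this monotonicity to be immediate from convexity of $\arcsin$.

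Finally, $\lambda(t)$ sweeps the whole interval $(0,1]$ continuously. Hence $d_{12}(t)/d_{13}(t)=r(\lambda(t))$ is not constant in $t$, so the geodesic triangle does not stay similar to itself.

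The only delicate step is guaranteeing that the collapsing triangle is not equilateral, that is, that the side lengths are not all equal. I handle this through the lemma's exclusion of equilibria. The equilateral configuration is a rigidly rotating relative equilibrium on the sphere, so its shape cannot shrink, and it therefore cannot collapse.
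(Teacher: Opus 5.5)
Your proof is correct and follows the paper's argument (chord self-similarity, $d_{ij}=2\arcsin(\ell_{ij}/2)$, and the fact that $\arcsin(\lambda x)/\arcsin(x)$ depends on $x$), but it is more complete: you prove via strict convexity that $f(a)/f(b)\neq a/b$, and you use the relative-equilibrium exclusion to rule out equal initial sides, a case the paper's one-line proof tacitly assumes away (for an equilateral triangle a common $\mu(t)$ trivially exists). In the final write-up, delete your first attempt claiming that $L=0$ excludes the equilateral triangle, since, as you noticed yourself, $M=0$ makes $L$ vanish identically on equilateral configurations.
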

\begin{proof}
Given collapse on $\mathbb{S}^2$, there exists $\lambda(t)$ such that $\ell_{ij}(t) = \lambda(t) \ell_{ij}(0)$. Using the relation between geodesic distance and chord length $d_{ij} = 2 \arcsin(\ell_{ij}/2)$, 
\begin{equation*}
    d_{ij}(t) = 2 \arcsin \left(\frac{\ell_{ij}(0)}{2}\lambda(t) \right).
\end{equation*}
However, $\arcsin(\lambda x) \neq \mu(\lambda)  \arcsin(x)$,
hence there does not exist a function $\mu(t)$ such that $d_{ij}(t) = \mu(t) d_{ij}(0)$.
\end{proof} 

\subsection{The hyperbolic plane}
We now switch to the surface of constant negative curvature, the hyperbolic plane. Following the same procedure as with the sphere, we find
\begin{equation*}
    R_{\mathbb{H}^2}(p, q) = - \frac{1}{2\pi}\log\left(\frac{\tanh(d_{\mathbb{H}^2}(p, q)/2)}{d_{\mathbb{H}^2}(p, q)} \right), \quad \text{and} \quad     R_{\mathbb{H}^2}(q, q) = \frac{\log 2}{2\pi}.
\end{equation*}
Thus, the Hamiltonian for the vortex system is
\begin{equation}
\label{HamiltonianH2}
    H_{\mathbb{H}^2}(z) = - \frac{1}{2\pi} \sum_{i<j} \Gamma_i \Gamma_j \log\left(\tanh \frac{d_{\mathbb{H}^2}(z_i, z_j)}{2} \right) + \frac{\log 2}{4\pi}\sum_{\ell = 1}^N \Gamma_{\ell}^2.
\end{equation}
Henceforth, let us ignore the constant term in the Hamiltonian. Here, the argument of the logarithm is known as the \textit{pseudohyperbolic distance}, and we shall denote it as
\begin{equation}
    \rho_{ij}\equiv  \tanh\left(\frac{d_{\mathbb{H}^2}(z_i, z_j)}{2} \right).
\end{equation}
\begin{remark}[Potential flows on the hyperbolic plane]
    It is important to note a subtlety regarding the point vortex system defined by (\ref{HamiltonianH2}). In the space of $L^2$ velocity fields, the velocity is not uniquely determined by the vorticity. Rather, one may decompose the velocity field into a vorticity-induced part $K[\omega]$ and a harmonic part, that is, $u = K[\omega] + u_h$, where $u_h$ is an $L^2$-harmonic vector field. On $\mathbb{H}^2$, the space of such harmonic fields is infinite-dimensional \cite{khesin2012hyperbolic}. Therefore, the point vortex system taken above retains only the vorticity-induced component $K[\omega]$ of the velocity field, corresponding to the choice $u_h = 0$. However, it is not automatic from this construction that an initial choice $u_h(0)=0$ remains zero throughout the Euler dynamics, see discussion in \cite{drivas2023singularity}. As such, flows with nontrivial harmonic component are not necessarily represented by this point vortex model. A more honest reduced model for highly concentrated vortices should follow from considerations similar to those in \cite{grotta2024interplay}.
\end{remark}

In principle, one could define self-similar collapse with respect to any two-point function $f(\cdot, \cdot)$. However, the structure of the Hamiltonian singles out a natural choice: the quantity appearing inside the logarithm. In the hyperbolic plane $\mathbb{H}^2$, this is (up to a linear rescaling) the pseudohyperbolic distance $\rho_{ij}$. It is therefore most natural to define self-similarity in terms of said variable. Indeed, this choice is consistent with the Euclidean and spherical cases, where the Hamiltonian depends logarithmically on the chord length, and self-similar collapse correspondingly occurs in terms of that quantity.

We would like to make use of the conserved quantities. We already have the Hamiltonian. Furthermore, following \cite{montaldi2014hyperbolic}, there exists a momentum map $\mathbf{J} = \sum_i \Gamma_i X_i$ which is conserved. The Hyperbolic plane is endowed with the Lorentz inner product 
\begin{equation*}
    \langle x, y \rangle = x_1 y_1 + x_2 y_2 - x_3 y_3.
\end{equation*}
The Lorentz norm of the momentum map $\mathbf J$ must be conserved as well:
\begin{equation*}
    \langle \mathbf J,\mathbf J\rangle_L = \sum_i \Gamma_i^2\langle X_i,X_i\rangle_L + 2\sum_{i<j}\Gamma_i\Gamma_j\langle X_i,X_j\rangle_L.
\end{equation*}
Since $X_i \in \mathbb{H}^2$, one has $\langle X_i,X_i\rangle_L=-1$, while
\begin{equation*}
    \langle X_i,X_j\rangle_L=-\cosh d_{ij} = -\frac{1+\rho_{ij}^2}{1-\rho_{ij}^2},.
\end{equation*}
Therefore
\begin{equation*}
    \langle \mathbf J,\mathbf J\rangle_L  = -\sum_i \Gamma_i^2 - 2\sum_{i<j}\Gamma_i\Gamma_j\frac{1+\rho_{ij}^2}{1-\rho_{ij}^2} = -\left(\sum_i \Gamma_i\right)^2 - 4\sum_{i<j}\Gamma_i\Gamma_j\frac{\rho_{ij}^2}{1-\rho_{ij}^2}.
\end{equation*}
Since $\langle \mathbf J,\mathbf J\rangle_L$ and $\sum_i\Gamma_i$ are conserved, it follows that
\begin{equation}
    L \equiv \sum_{i<j}\Gamma_i\Gamma_j\frac{\rho_{ij}^2}{1-\rho_{ij}^2}
\end{equation}
is also conserved.  With two conserved quantities now known, we may write the necessary conditions for self-similar collapse on $\mathbb{H}^2$ in terms of the pseudohyperbolic distance. 
\begin{lem}\label{lem:necessary-H2}
    If a three point vortex configuration collapses self-similarly in terms of $\rho_{ij}$, the necessary conditions for collapse on $\mathbb{H}^2$ are that (i): 
\begin{equation}
    \sum_{i<j} \Gamma_i \Gamma_j = 0,
\end{equation}
and (ii):
\begin{equation}
    \sum_{i<j} \Gamma_i \Gamma_j \frac{\rho_{ij}^2}{1 - \rho_{ij}^2} = 0.
\end{equation}
\end{lem}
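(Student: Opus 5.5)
The plan is to mimic the planar proof of Lemma~\ref{thm:plane-collapse}, using the two conserved quantities available on $\mathbb{H}^2$: the Hamiltonian \eqref{HamiltonianH2} and the reduced momentum invariant $L=\sum_{i<j}\Gamma_i\Gamma_j\rho_{ij}^2/(1-\rho_{ij}^2)$ derived just above. Self-similar collapse in terms of $\rho_{ij}$ means there is a common factor $\lambda(t)$ with $\rho_{ij}(t)=\lambda(t)\rho_{ij}(0)$, $\lambda(0)=1$, and $\lambda(t)\to 0$ as $t\to t_c$.

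\emph{Condition (ii).} Since $\rho_{ij}\to 0$ for all pairs as $t\to t_c$, each term $\rho_{ij}^2/(1-\rho_{ij}^2)$ tends to $0$. Hence $L(t)\to 0$. Because $L$ is conserved, $L\equiv 0$, which is exactly (ii). This step needs neither self-similarity nor the Hamiltonian.

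\emph{Condition (i).} Here I use the Hamiltonian. Substituting $\rho_{ij}=\lambda\rho_{ij}(0)$ into \eqref{HamiltonianH2}, with the constant dropped, gives
\begin{equation*}
H=-\frac{1}{2\pi}\Big(\sum_{i<j}\Gamma_i\Gamma_j\Big)\log\lambda(t)\;-\;\frac{1}{2\pi}\sum_{i<j}\Gamma_i\Gamma_j\log\rho_{ij}(0).
\end{equation*}
The second term is constant in time, and $H$ is conserved. Therefore $\big(\sum_{i<j}\Gamma_i\Gamma_j\big)\log\lambda(t)$ must be constant. As $t\to t_c$ we have $\log\lambda\to-\infty$, so the coefficient must vanish: $\sum_{i<j}\Gamma_i\Gamma_j=0$.

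Compared with the planar case, this is easier, because self-similarity is assumed rather than derived, so no angle variables or collinearity argument are needed. The only point needing care is that the collapse is nondegenerate: all three $\rho_{ij}(0)>0$ (distinct vortices), so the logarithms are finite and $\lambda(t)\to 0$ genuinely occurs. The proof will note this explicitly.
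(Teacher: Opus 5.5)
Your proposal is correct and matches the paper's proof essentially line by line: (ii) follows from conservation of $L$ together with $\rho_{ij}\to 0$, and (i) follows from substituting $\rho_{ij}=\lambda\rho_{ij}(0)$ into the conserved Hamiltonian and forcing the coefficient of $\log\lambda$ to vanish. Where the paper only says ``$\lambda$ is nonconstant,'' you use $\log\lambda\to-\infty$; this is a slightly more explicit version of the same step.
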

\begin{proof}
First, consider the conserved quantity $L$. Since $\rho_{ij}$ must go to zero for collapse to occur, and given that it is conserved, $L = 0$ for all times.
For conservation of the Hamiltonian, it must be true that
\begin{equation*}
    \sum \Gamma_i \Gamma_j \log \rho_{ij}(t) = \mathrm{const}.
\end{equation*}
If collapse is self-similar, there exists $\lambda(t)$ such that $\rho_{ij}(t) = \lambda(t)\rho_{ij}(0)$. Substituting this into $H$, 
\begin{equation*}
    H = - \frac{1}{2\pi}\log \lambda(t)\sum_{i<j}\Gamma_i \Gamma_j +\mathrm{const}.
\end{equation*}
The Hamiltonian is an integral of motion, and since $\lambda(t)$ is nonconstant, the coefficient in front of $\lambda(t)$ must vanish. We conclude that 
\begin{equation*}
    \sum_{i<j} \Gamma_i \Gamma_j = 0.
\end{equation*}
\end{proof}

However, the conditions become too restrictive. 
\begin{prop}\label{prop:self-similar-pseudo}
There does not exist self-similar collapse in terms of the pseudohyperbolic distance on $\mathbb{H}^2$. 
\end{prop}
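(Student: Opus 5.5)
Suppose, for contradiction, that a three-vortex configuration on $\mathbb{H}^2$ collapses self-similarly in terms of $\rho_{ij}$. Then there is a continuous $\lambda(t)$ with $\lambda(0)=1$ and $\lambda(t)\to 0$ as $t\to t_c$, such that $\rho_{ij}(t)=\lambda(t)\rho_{ij}(0)$. By continuity, $\lambda$ takes every value in $(0,1]$. Set $a_{ij}=\rho_{ij}(0)^2\in(0,1)$. Lemma \ref{lem:necessary-H2} then gives two facts. First, $\sum_{i<j}\Gamma_i\Gamma_j=0$. Second, the identity
\begin{equation*}
F(\lambda):=\sum_{i<j}\Gamma_i\Gamma_j\,\frac{\lambda^2 a_{ij}}{1-\lambda^2 a_{ij}}=0
\end{equation*}
holds for every $\lambda\in(0,1]$. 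The strategy is to show these constraints are too rigid: they force all $\Gamma_i\Gamma_j$ to vanish unless the triangle is equilateral, and the equilateral case is ruled out by the dynamics.

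\textbf{Step 1 (turn $L=0$ into infinitely many algebraic relations).} The function $F$ is real-analytic in $\lambda$ on $|\lambda|<1$, since each $a_{ij}<1$. Expanding the geometric series gives
\begin{equation*}
F(\lambda)=\sum_{n\ge1}\lambda^{2n}\sum_{i<j}\Gamma_i\Gamma_j\,a_{ij}^n .
\end{equation*}
Because $F$ vanishes on an interval, every coefficient vanishes:
\begin{equation*}
\sum_{i<j}\Gamma_i\Gamma_j\,a_{ij}^n=0\qquad\text{for all } n\ge1 .
\end{equation*}
Condition (i) supplies the same relation for $n=0$. In particular, the vector $(\Gamma_1\Gamma_2,\Gamma_2\Gamma_3,\Gamma_3\Gamma_1)$ lies in the kernel of the $3\times3$ Vandermonde matrix with rows $(a_{12}^n,a_{23}^n,a_{31}^n)$ for $n=0,1,2$.

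\textbf{Step 2 (case analysis on the $a_{ij}$).}
\begin{itemize}
\item If the three $a_{ij}$ are distinct, the Vandermonde matrix is invertible. Hence every $\Gamma_i\Gamma_j=0$, contradicting nonzero circulations.
\item If exactly two coincide, say $a_{12}=a_{13}\neq a_{23}$, the system reduces to a $2\times2$ Vandermonde system. It gives $\Gamma_1(\Gamma_2+\Gamma_3)=0$ and $\Gamma_2\Gamma_3=0$, again a contradiction.
\item The only surviving case is $a_{12}=a_{23}=a_{31}$: the triangle is equilateral in the pseudohyperbolic, hence geodesic, sense. Here the relations reduce to condition (i) and give no contradiction.
\end{itemize}

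\textbf{Step 3 (the equilateral case is an equilibrium of the shape).} This is the main obstacle, because the conserved quantities no longer suffice and one must use the equations of motion. I would derive the hyperbolic analogue of the spherical length equation. Working in the hyperboloid model, with $\dot X_i$ given by Lorentz cross products weighted by $G'(d_{ij})$, differentiate $\langle X_i,X_j\rangle_L=-\cosh d_{ij}$. This should give an expression of the form
\begin{equation*}
\frac{d}{dt}\cosh d_{ij}=c\,\Gamma_k\,V\bigl(g(d_{jk})-g(d_{ki})\bigr),
\end{equation*}
where $c$ is a constant, $V$ is the Lorentzian triple product $\det(X_1,X_2,X_3)$, and $g$ is a fixed function coming from the derivative of $\log\tanh(d/2)$.

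The key point is that the right-hand side is an antisymmetric difference of the same function evaluated at the other two side lengths. So when all three sides are equal, every $\dot d_{ij}=0$, and by uniqueness for the ODE the triangle stays equilateral of fixed size: it moves as a rigid relative equilibrium under the $SO(2,1)$ action. Hence $\lambda\equiv1$, contradicting collapse. If computing the exact form of $g$ is cumbersome, an alternative is a symmetry argument: the isometry of $\mathbb{H}^2$ cyclically permuting the vertices of an equilateral triangle shows that the shape velocity is equivariant, and the antisymmetric structure above is exactly what this equivariance encodes. This closes all cases and proves Proposition \ref{prop:self-similar-pseudo}.
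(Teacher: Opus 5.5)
Your proposal is correct and follows the paper's proof essentially step by step: you expand $L=0$ as a power series in $\lambda^2$, force every coefficient to vanish, run a Vandermonde argument with the same three-way case split, and exclude the equilateral case. The differences are minor: you use the $n=0$ relation from condition (i) in place of the paper's $k=3$ relation, and you derive the equilibrium property of equilateral triangles from the length equation $\frac{d}{dt}\cosh d_{ij}\propto \Gamma_k V\bigl(g(d_{jk})-g(d_{ki})\bigr)$ (which does hold, with $g(d)\propto 1/\sinh^2 d$) rather than citing \cite{montaldi2014hyperbolic}. Drop the fallback symmetry argument, though: $\sum_{i<j}\Gamma_i\Gamma_j=0$ forces the circulations to be unequal, so the isometry cyclically permuting the vertices does not map the labelled vortex system to itself and gives no constraint on the shape velocity.
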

\begin{proof}
By Lemma \ref{lem:necessary-H2}, any solution collapsing self-similarly must satisfy
\begin{equation*}
    \sum_{i<j} \Gamma_i \Gamma_j \frac{\rho_{ij}^2}{1-\rho_{ij}^2} = 0,
\end{equation*}
which is conserved. Suppose, for the sake of contradiction, that a self-similar collapsing solution exists. Then there is a function $\lambda(t)$ such that
\begin{equation*}
    \rho_{ij}(t)=\lambda(t)\rho_{ij}(0).
\end{equation*}
Substituting into the conserved quantity gives
\begin{equation*}
    \sum_{i<j} \Gamma_i \Gamma_j  \frac{\lambda^2(t)\rho_{ij}^2(0)}{1-\lambda^2(t)\rho_{ij}^2(0)} = 0.
\end{equation*}
We expand $\frac{\lambda^2 x}{1-\lambda^2 x}  =  \sum_{k=1}^\infty \lambda^{2k} x^k$ to obtain
\begin{equation*}
    \sum_{k=1}^\infty \lambda^{2k}(t) \sum_{i<j} \Gamma_i \Gamma_j [\rho_{ij}(0)]^{2k} = 0.
\end{equation*}
Given that $\lambda(t)$ is nonconstant, which is must be for collapse to occur, it follows that each coefficient must vanish:
\begin{equation*}
    \sum_{i<j} \Gamma_i \Gamma_j [\rho_{ij}(0)]^{2k} = 0, \quad \mathrm{for~all~} k\in\mathbb{N}.
\end{equation*}

Take the first three equations corresponding to $k = 1, 2,3$, that is, $\sum \Gamma_i \Gamma_j [\rho_{ij}(0)]^2 = 0$, $\sum \Gamma_i \Gamma_j [\rho_{ij}(0)]^4 = 0$, and $\sum \Gamma_i \Gamma_j [\rho_{ij}(0)]^6 = 0$. The equations may be written in matrix form, 
\begin{equation*}
    \begin{pmatrix}
        \rho_{12}(0)^2 & \rho_{13}^2(0) & \rho_{23}^2(0)\\
        \rho_{12}(0)^4 & \rho_{13}^4(0) & \rho_{23}^4(0)\\
        \rho_{12}(0)^6 & \rho_{13}^6(0) & \rho_{23}^6(0)
    \end{pmatrix}
    \begin{pmatrix}
        \Gamma_1 \Gamma_2 \\ \Gamma_1 \Gamma_3 \\ \Gamma_2 \Gamma_3
    \end{pmatrix} = 
    \begin{pmatrix}
        0 \\ 0 \\ 0
    \end{pmatrix}.
\end{equation*}
Assume that $\rho_{ij}$ are pairwise distinct for all pairs $i, j$. The determinant of the coefficient matrix is nonzero, hence such a system only has a solution if $\Gamma_i \Gamma_j = 0$ for all pairs $i, j$, which is impossible since the circulations are nonzero. 

Hence, suppose that two squared distances are equal. Then, we obtain $\Gamma_i \Gamma_j + \Gamma_i \Gamma_k = 0$ and $\Gamma_j \Gamma_k = 0$. Again, since the circulations are nonzero, this is impossible. 

As such, the only remaining possibility is if all initial lengths are equal, that is, $\rho_{ij} = \rho_{km}$. However, such a configuration is a relative equilibrium \cite{montaldi2014hyperbolic}, which contradicts the assumption that collapse occurs. Therefore, self-similar collapse does not occur on $\mathbb{H}^2$ in terms of the pseudohyperbolic distance.
\end{proof}

Thus far, we have considered self-similar collapse with respect to the variable analogous to the planar and spherical cases. Perhaps, it could be that self-similar collapse exists in some different variable that is a function of the geodesic (or equivalently, pseudohyperbolic) distance. Let us consider a more general case. 

In particular, we shall defined distance through analytic functions of $\rho_{ij}$, $f(\rho)$, and rule out self-similar collapse with respect to this distance. Note that the pseudohyperbolic distance is related to the geodesic distance by an analytic function on $\mathbb{H}^2$, hence $f(\rho(d))$ is itself an analytic function, so the following proposition also holds for distance functions defined in terms of analytic functions of the geodesic distance $g(d_{ij})$.

\begin{prop}
    Let $f(x)$ be a strictly increasing analytic function such that $f(0) = 0$. Then, there exists no self-similar collapse in terms of $r_{ij} = f(\rho_{ij})$ distance on $\mathbb{H}^2$.
\end{prop}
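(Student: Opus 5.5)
I would follow the template of Proposition~\ref{prop:self-similar-pseudo}, replacing $\rho_{ij}$ by $r_{ij}=f(\rho_{ij})$. Suppose a self-similar collapsing solution exists, so $r_{ij}(t)=\lambda(t)\,r_{ij}(0)$ with $\lambda$ nonconstant and $\lambda(t)\to 0$ as $t\to t_c$. Since $f$ is analytic, strictly increasing and $f(0)=0$, it has an analytic inverse $g=f^{-1}$ near $0$ with $g(0)=0$. Write its expansion as
\begin{equation*}
g(y)=\sum_{m\ge m_0} a_m y^m,\qquad a_{m_0}\neq 0 .
\end{equation*}
The simplest case is $f'(0)\neq 0$, i.e.\ $m_0=1$. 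If $f'(0)=0$, then $g$ is only a Puiseux-type function; I would handle that by writing $f(x)=x^p h(x)$ with $h(0)\neq0$ and replacing the variable by $f^{1/p}$, which is again analytic with nonzero derivative and generates the same self-similarity. Then $\rho_{ij}(t)=g(\lambda(t) r_{ij}(0))$, and for $t$ near $t_c$ everything lies in the radius of convergence.

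The two conserved quantities give power series identities in $\lambda$. From Lemma~\ref{lem:necessary-H2}'s argument, $L=0$ at collapse, so $L\equiv0$. Substituting, we get
\begin{equation*}
F(\lambda):=\sum_{i<j}\Gamma_i\Gamma_j\,\frac{g(\lambda r_{ij}(0))^2}{1-g(\lambda r_{ij}(0))^2}\equiv 0
\end{equation*}
for all $\lambda$ in an interval accumulating at $0$. Because $F$ is analytic in $\lambda$ and vanishes on a set with an accumulation point, $F\equiv 0$ as a power series. Writing $\phi(y)=g(y)^2/(1-g(y)^2)=\sum_{k\ge2} c_k y^k$, with $c_2=a_1^2\neq0$, we obtain for every $k$ the identity
\begin{equation*}
c_k\sum_{i<j}\Gamma_i\Gamma_j\, r_{ij}(0)^k=0 .
\end{equation*}
The Hamiltonian is used in the same way. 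Note that $\log g(\lambda r)=\log\lambda+\log r+\log a_1+\psi(\lambda r)$ with $\psi$ analytic and $\psi(0)=0$. Conservation of $H$ then forces $\sum\Gamma_i\Gamma_j=0$ from the $\log\lambda$ term, and it also gives the vanishing of $d_k\sum\Gamma_i\Gamma_j r_{ij}(0)^k$ for the Taylor coefficients $d_k$ of $\psi$.

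The key step, and the main obstacle, is to guarantee that enough of these coefficients are nonzero to run the Vandermonde argument. With $P_k:=\sum_{i<j}\Gamma_i\Gamma_j r_{ij}(0)^k$, we already have $P_0=0$ from $H$ and $P_2=0$ from $c_2\neq0$. I need one more independent index. The plan is to show that the combined coefficient sequences $\{c_k\}$ and $\{d_k\}$ cannot both vanish for all $k\ge3$. Suppose they did. Then $\phi(y)=c_2y^2$ exactly, so $g$ is determined: $g(y)^2=c_2y^2/(1+c_2y^2)$. Feeding this into $\psi$ should yield a nonzero $y^2$ coefficient, $\psi(y)=-\tfrac12 c_2y^2+\dots$. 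In that case $P_2=0$ gives nothing new, and I would look for the first nonvanishing $d_k$ with $k\neq2$. I am unsure of this bookkeeping, and would try first to show directly that $c_k$ and $d_k$ nonzero for $k\in\{0,2\}$ together with one odd or higher even $k$ is unavoidable, by comparing the $y^4$ coefficients of $\phi$ and $\psi$ in terms of $a_1,a_2,a_3$.

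Once three indices $k_1<k_2<k_3$ with $P_{k}=0$ are in hand, the argument of Proposition~\ref{prop:self-similar-pseudo} finishes the proof. The generalized Vandermonde matrix $(r_{ij}(0)^{k_s})$ with distinct positive $r_{ij}(0)$ is nonsingular, which forces all $\Gamma_i\Gamma_j=0$. If exactly two of the $r_{ij}(0)$ coincide, we get $\Gamma_j\Gamma_k=0$, which is again impossible. If all three are equal, then by strict monotonicity of $f$ all the $\rho_{ij}$ are equal, which is a relative equilibrium by \cite{montaldi2014hyperbolic} and contradicts collapse.
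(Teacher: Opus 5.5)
Your proposal is essentially the paper's proof: you reduce to $f^{1/m}$ with nonzero derivative, obtain $P_0=0$ from $H$ and $P_2=0$ from $L$, extract a third exponent from the dichotomy ``some $c_k\neq 0$ with $k\ge 3$, or $\phi(y)=c_2y^2$ exactly,'' and finish with the Vandermonde argument. The step you left open closes immediately: if $\phi(y)=c_2y^2$ then $\psi(y)=-\tfrac12\log(1+c_2y^2)=-\tfrac12c_2y^2+\tfrac14c_2^2y^4-\cdots$, so $d_4=c_2^2/4\neq0$ and hence $P_4=0$, which is exactly the paper's observation that $h'/h$ then has the nonzero coefficient $b_2^2$ on $x^3$.
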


\begin{proof}
Since $f$ is analytic and $f(0) = 0$, it admits a Taylor expansion of the form
\begin{equation}
    f(x) = a_m x^m + a_{m+1}x^{m+1} + \cdots,
\end{equation}
where $m \geq 1$ is the order of zero of $f$ at zero. Since $f$ is strictly increasing, the first nonzero coefficient is positive $a_m > 0$. Let us define the function $F(x) = [f(x)]^{1/m} = x(a_m + a_{m+1}x + \cdots)^{1/m}$. Indeed, $F$ is itself analytic at zero, $F(0) = 0$, and $F'(0) = a_m^{1/m} \neq 0$. Hence, by the inverse function theorem, $h = F^{-1}$ is analytic in a neighborhood of zero. Let us define $R_{ij} = F(\rho_{ij})$. 

Suppose there exists self-similar collapse with respect to $f$, that is, $r_{ij}(t) = \lambda(t) r_{ij}(0)$. Then, $R_{ij}(t) = [\lambda(t)]^{1/m} R_{ij}(0)$, hence self-similarity in $r = f(\rho)$ implies self-similarity in terms of $R = F(\rho)$. It is therefore sufficient to show that there exists no self-similar collapse in terms of $F(\rho_{ij})$. 

By definition, $h$ has a simple zero at $0$. Since it is analytic, we may Taylor expand it and find 
\begin{equation*}
    \frac{h'(x)}{h(x)} = \frac{1}{x}+\sum_{n = 0}^\infty c_n x^n.
\end{equation*}
Conservation of the Hamiltonian $dH/dt = 0$ yields the condition 
\begin{equation*}
    0 = - \frac{1}{2\pi}\sum_{i<j} \Gamma_i \Gamma_j R_{ij}(0)\frac{h'(\mu(t) R_{ij}(0))}{h(\mu(t) R_{ij}(0))}
\end{equation*}
where we have denoted $\mu(t) = [\lambda(t)]^{1/m}$. Substituting the expansion yields 
\begin{equation*}
    0 = \frac{1}{\mu}\sum_{i<j}\Gamma_i \Gamma_j + \sum_{n \geq 0} c_n \mu^n \sum_{i<j} \Gamma_i \Gamma_j [R_{ij}(0)]^{n+1}.
\end{equation*}
Since this holds for an interval of values of $\mu$, all coefficients must vanish: 
$
    \sum_{i<j}\Gamma_i \Gamma_j = 0
$
and 
\begin{equation*}
    \sum_{i<j} \Gamma_i \Gamma_j [R_{ij}(0)]^{n+1} = 0\quad \mathrm{for ~ all} ~ n~ \mathrm{with~} c_n \neq 0.
\end{equation*}
Since $L \to 0$ at collapse, it follows that $L = 0$ for all times. Recall that $h$ has a simple zero at $0$. Then, let us write 
\begin{equation*}
    h^2/(1 - h^2) = \sum_{n\geq 2} b_n x^n.
\end{equation*} 
Indeed, since $h$ has a simple zero at the origin, $b_2 > 0$. Moreover, because $L = 0$, we also have 
\begin{equation*}
    0 = \sum_{n=2}^\infty b_n \mu^n \sum_{i<j} \Gamma_i \Gamma_j [R_{ij}(0)]^n. 
\end{equation*}
Thus, whenever $b_n \neq 0$, this implies that $\sum_{i<j} \Gamma_i \Gamma_j [R_{ij}(0)]^n = 0$. 

To draw a contradiction similar to that in Prop. \ref{prop:self-similar-pseudo}, we wish to show that we are guaranteed an equation of the form $\sum \Gamma_i \Gamma_j [R_{ij}(0)]^q = 0$ for some $q\neq 0, 2$. If $b_q \neq 0$ for some $q \neq 2$, this is immediate. Otherwise, 
\begin{equation*}
    \frac{h^2(x)}{1 - h^2(x)} = b_2 x^2. 
\end{equation*}
Since $b_2 > 0$,
\begin{equation*}
    h(x) = \frac{\sqrt{b_2}x}{\sqrt{1 + b_2 x^2}}. 
\end{equation*}
Consequently, 
\begin{equation*}
    \frac{h'(x)}{h(x)} = \frac{1}{x} - b_2 x + b_2^2 x^3 - \cdots.
\end{equation*}
Thus, let $c_3 = b_2^2 \neq0$, and the Hamiltonian supplies an equation with exponent $q = 4$. 

We therefore arrive at three conditions: 
\begin{equation*}
   (i): \sum_{i<j} \Gamma_i\Gamma_j = 0, \quad (ii): \sum_{i<j} \Gamma_i \Gamma_j [R_{ij}(0)]^2 = 0, \quad (iii): \sum_{i<j} [R_{ij}(0)]^q = 0
\end{equation*}
for some $q \neq 0, 2$. Now, an analogous argument to that of Prof. \ref{prop:self-similar-pseudo} follows. If $R_{ij}(0)$ are pairwise distinct for all pairs $i, j$, then the system is invertible and it follows that $\Gamma_i \Gamma_j = 0$ for all pairs $i, j$, leading to trivial circulations. The system only has a non-trivial solution if the initial configuration is equilateral; however, such a configuration on $\mathbb H^2$ is a relative equilibrium \cite{montaldi2014hyperbolic}, thus collapse does not occur. 
\end{proof}

The absence of such self-similar collapses in $\mathbb H^2$ highlights a fundamental difference between negative and nonnegative curvature. Let us try to interpret this difference geometrically. On surfaces, the separation of nearby geodesics (the geodesic deviation) satisfies the Jacobi equation. For a surface of constant curvature $K$ (we take $K = \{\pm 1, 0\}$ for simplicity), we may write it as $S''(r) + K\,S(r) = 0$. The solution to this equation is 
\begin{equation*}
    S(r) = \begin{cases}
        \sin r, \quad & K = 1, \\
        r, & K = 0,\\ 
        \sinh( r), \quad & K < 0. 
    \end{cases}
\end{equation*}
The function $S(r)$ defines the angular component of the metric. Equivalently, $2\pi S(r)$ is the circumference of a geodesic circle of radius $r$, so $S(r) dr$ may be thought of the ``density" of geodesic circles. 

The collapse of three point vortices hinges on two conserved quantities: the Hamiltonian and $L$, which arise from the Green's function and momentum map, respectively. Indeed, both the Green's function and momentum map probe this same density in two complementary ways. The momentum map, which is associated with rotations, may be thought of as measuring circulation weighted by the enclosed area, obtained by integrating this density. Consequently, the pairwise momentum invariant $L$ naturally involves $\sin^2(d_s/2)$ and $\sinh^2(d_s/2)$ for the sphere and hyperbolic plane, respectively. The Green's function is the potential generated by a point source. Conservation of flux implies that its radial derivative is inversely proportional to the circumference of a geodesic circle, so the Green's functions naturally involve $\tan(d_s/2)$ and $\tanh(d_s/2)$. 

The absence of self-similar collapses above suggests that self-similarity is compatible with the conservation of $L$ and $H$ only when they depend on the same distance variable, or on variables related by sufficiently simple functions. In that case, the scale factor ``separates", and conservation reduces to conditions on the initial data. On the contrary, if the variables do not agree, requiring conservation generates additional shape-dependent constraints, which may form general obstructions. 

As discussed previously, the sphere admits a uniform vorticity background, allowing one to ``match" the form of the Green's function to the variable in $L$, namely, equation (\ref{eq:green-sphere}). On the other hand, this is not possible on surfaces of negative curvature, owing to the exponential growth of volume on $\mathbb H^2$ and the prescribed behaviour of the Green's function at infinity.

\begin{remark}[Collapse in terms of extrinsic distances]
On the sphere, the collapse is self-similar in terms of an extrinsic distance (Euclidean chord distance), rather than an intrinsic one (geodesic distance). 
We have shown the lack of self-similar collapses in terms of analytic functions of intrinsic distance on the hyperbolic plane. 
But one may consider embedding $\mathbb H^2$ in $\mathbb R^3$, and seeking extrinsically self-similar collapse. By embedding into Euclidean three-space, the equations work only locally, but this is sufficient for considering collapse. Given two points on the hyperbolic plane, the Euclidean chord length is then a function not only of the intrinsic geodesic distance, but also on the specific location of the points. It might be possible then that one may find a specific collapsing configuration which does so self-similarly in terms of the extrinsic distance, yet this need not be a generic feature of the collapse.  Perhaps a more natural embedding of $\mathbb H^2$ is into Minkowski space $\mathbb R^{2, 1}$ (e.g. $\mathbb{R}^3$ endowed with the Minokowski metric). Now, the extrinsic distance is Lorentzian. Indeed, this ``matching" of geometries resolves the dependence of chord length on position, is related to the intrinsic geodesic distance as $2\sinh(d_{ij}/2)$, and becomes a more direct analog of the embedding of $\mathbb S^2$ in $\mathbb R^3$. However, this then falls under the assumptions of Theorem \ref{thm:1-1}, and as such no collapses in terms of this extrinsic distance can occur.

All this said, we leave open the interesting possibility that, upon discovery of the ``right notion" of inter-vortex distance, the collapse could be self-similar.
\end{remark}

\section{Proof of existence}
In the previous section, we considered specifically the motion that is self-similar. However, as mentioned previously, one would not generally expect collapse to be self-similar. As such, we wish to prove the existence of collapsing three-vortex configurations on a general surface $S$. To do this, we shall show that any planar self-similar three vortex collapse can be (approximately) realized in a sufficiently small geodesic neighborhood of a point $p\in S$, and that for a sufficiently small initial configuration, the surface dynamics remain close to the planar dynamics up to a rescaling in time. We show that for sufficiently small neighborhoods, the ``non-planar" terms are bounded, and therefore by the results of \cite{grotto2022burst}, the dynamics remains regular and we have collapsing configurations that are approximately planar. In principle, any configuration can then be found by using time-reversal symmetry and evolving infinitesimal configurations in time.

Furthermore, as the configuration shrinks, the non-planar corrections to the equations of motion become smaller. As such, as the configuration approaches collapse, the motion tends to become self-similar, and so the collapse is asymptotically self-similarly. 

We work with point vortices of nonzero circulation on a surface $S$ embedded in $\mathbb R^3$. Throughout, we shall assume that $S$ is oriented. Let $g$ denote the metric induced by the embedding of $S$. The tangent bundle $TS$ is endowed with the metric $g$ and the corresponding almost complex structure (rotation by $\pi/2$) $J(p) : T_pS \mapsto T_pS$ that rotates the tangent plane by 90 degrees. 

Following \cite{grotto2022burst}, introduce the space of functions
\begin{equation*}
    E_T = C([0, T), C^2(\mathbb {R}^2, \mathbb {R}^2)) \cap \mathrm{Lip}([0, T), C(\mathbb {R}^2,\mathbb {R}^2))
\end{equation*}
for time $T>0$ endowed with the norm $\lVert{f}\rVert_{E_T} = \sup_{t \in [0, T)} \lVert{f_t}\rVert_{C^2} + [f]_\mathrm{Lip}$. Here, $\lVert{f}\rVert_{C^k}$ is the sum of supremum norms of the first $k$-derivatives of $f$ for some $f:\mathbb R^2 \mapsto \mathbb R^2$ with continuous $k$-th derivative, and $[f]_\mathrm{Lip}$ denotes the Lipschitz seminorm. It is then proven that, given that a perturbation is sufficiently well-behaved with respect to the $E_T$ norm, the self-similar motion of point vortices persists under said perturbation, as stated in Lemma \ref{thm:prop-2.3-from-paper}.
\begin{lem}\label{thm:prop-2.3-from-paper}
Let $M,\rho>0$. Then there exists $T^*>0$ such that for every $T<T^*$ and every
$f\in E_T$ with $\|f\|_{E_T}\le M$, the system
\begin{equation*}
\dot z_j(t)=\frac{1}{2\pi}\sum_{k\ne j}\Gamma_k
\frac{(z_j(t)-z_k(t))^\perp}{|z_j(t)-z_k(t)|^2}+f(t,z_j(t)),
\qquad j=1,2,3,
\end{equation*}
admits a $C^1$ solution $z:(0,T)\to\mathbb R^6\setminus\Delta^3$
satisfying
\begin{equation*}
    \lim_{t\to0}z_j(t)=0,\qquad
    [z_j]_{C_T^{1/2}}\le 3[w_j]_{C^{1/2}},\quad j=1,2,3,
\end{equation*}
and
\begin{equation*}
    \sup_{0<t<T}|z(t)|\le \rho .
\end{equation*}
\end{lem}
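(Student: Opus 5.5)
The plan is to construct the solution as a small perturbation of the planar self-similar \emph{burst} obtained from the collapse by time reversal. We look for it in self-similar variables and solve a fixed-point problem on a short interval $(0,T)$ by contraction.

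\textbf{Reference burst and rescaled equations.} Under the hypothesis of Lemma \ref{thm:plane-collapse}, reversing time in a planar self-similar collapse gives an unperturbed solution of the form
\begin{equation*}
w_j(t)=\sqrt{t}\,e^{i\omega\log t}\,W_j ,
\end{equation*}
with fixed profile $W=(W_1,W_2,W_3)\in\mathbb R^6\setminus\Delta^3$ and some $\omega\in\mathbb R$. It emanates from the origin, and $[w_j]_{C^{1/2}}<\infty$. We introduce $s=\log t\in(-\infty,\log T)$ and write
\begin{equation*}
z_j(t)=\sqrt{t}\,e^{i\omega s}\zeta_j(s).
\end{equation*}
Because the Biot--Savart term is homogeneous of degree $-1$ and rotation-equivariant, it becomes autonomous in $\zeta$, and $W$ is an equilibrium of the rescaled vector field. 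The forcing turns into a non-autonomous term of size
\begin{equation*}
e^{s/2}\,\bigl|f(e^s,\sqrt{t}\,e^{i\omega s}\zeta_j)\bigr|\le M e^{s/2},
\end{equation*}
which is exponentially small as $s\to-\infty$. Its derivatives in $\zeta$ carry even more factors $e^{s/2}$, because of the $C^2$ bound on $f$. It is convenient first to subtract the drift $c(t)=\int_0^t f(\tau,0)\,d\tau=O(Mt)$. This is harmless since $O(t)\ll\sqrt t$, and it removes the only forcing component not vanishing at $z=0$.

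\textbf{Fixed-point formulation.} Set $\zeta=W+r$. The equation becomes
\begin{equation*}
r'=Ar+Q(r)+F(s,r),
\end{equation*}
where $A$ is the linearization at $W$, $Q$ is quadratic near $0$, and $|F|,|\partial_r F|\lesssim Me^{s/2}$. Split the spectrum of $A$ into the part with $\operatorname{Re}\lambda<1/2$ and the part with $\operatorname{Re}\lambda\ge 1/2$; the neutral directions coming from rotation and scaling belong to the first part. We seek $r$ as the fixed point of the Duhamel map with two components.
\begin{itemize}
\item On the first spectral subspace we integrate from $-\infty$. The kernel $e^{A(s-\sigma)}e^{\sigma/2}$ is integrable at $-\infty$ exactly because $\operatorname{Re}\lambda<1/2$.
\item On the second subspace we integrate backward from $s_0=\log T$ with zero data.
\end{itemize}
We work in the weighted space $\sup_s e^{-\kappa s}|r(s)|$ for some $0<\kappa<1/2$, chosen above the real parts of the first block if possible, or accepting polynomial factors from Jordan blocks. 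In this space the map sends a small ball into itself and contracts once $s_0\le\log T^*$ is sufficiently negative. The smallness comes from $e^{s_0(1/2-\kappa)}$ and $M$, which is where $T^*=T^*(M,\rho)$ enters. The time-Lipschitz part of $\|f\|_{E_T}$ is used to make $F$ continuous in $s$, so that the fixed point is $C^1$.

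\textbf{Conclusions and main obstacle.} From $r\to0$ we get $z_j(t)=w_j(t)+o(\sqrt t)+c(t)$, so $z_j(t)\to0$ as $t\to0$. The pairwise distances stay comparable to $\sqrt t$, so $z$ avoids $\Delta^3$. Taking $T$ small gives $\sup_{0<t<T}|z(t)|\lesssim\sqrt{T}\le\rho$. For the Hölder bound we differentiate: $|\dot z_j|\lesssim t^{-1/2}$, with constant at most $(1+o(1))$ times that of $w_j$. Integrating this gives $[z_j]_{C^{1/2}_T}\le 3[w_j]_{C^{1/2}}$ once the perturbation is small enough. The main obstacle is the spectral step. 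We must show that $A$ has no spectrum exactly at $\operatorname{Re}\lambda=1/2$ (or handle it with a slightly smaller weight), and we must control the neutral symmetry directions so that the integral from $-\infty$ converges. I would first compute $A$ explicitly using the Hamiltonian structure: its eigenvalues come in pairs $\lambda,\,1-\lambda$ relative to the scaling symmetry. With a non-resonant choice of $\kappa$ this gives a clean dichotomy.
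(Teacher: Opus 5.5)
The paper does not prove this lemma; it imports it from \cite{grotto2022burst}. Your Lyapunov--Perron construction around the burst in the variables $s=\log t$ is therefore a genuinely different, self-contained route. It works, and more easily than you expect, because your spectral heuristic is wrong in a favourable direction.

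\textbf{The spectrum.} The rescaled field $X(\zeta)=B(\zeta)-(\tfrac12+i\omega)\zeta$ satisfies $\mathcal L_X\Omega=-\Omega$ for $\Omega=\sum_j\Gamma_j\,dx_j\wedge dy_j$; only the dilation part fails to preserve $\Omega$. So the eigenvalues of $A$ pair as $\lambda\leftrightarrow-1-\lambda$, not $\lambda\leftrightarrow 1-\lambda$. Concretely:
\begin{itemize}
\item translations $a(1,1,1)$ give $-\tfrac12\pm i\omega$;
\item rotation gives $A(iW)=0$;
\item the time shift $w(t+c)$ gives $A\,(\tfrac12+i\omega)W=-(\tfrac12+i\omega)W$. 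Dilation itself is not neutral: $AW=-(1+2i\omega)W$;
\item the two-dimensional shape quotient carries the covectors $DL(W)$ and $DH(W)$. These are nonzero and vanish on translations, rotation and dilation, because $L(W)=0$ and $\sum_{i<j}\Gamma_i\Gamma_j=0$. Along the unperturbed rescaled flow $\frac{d}{ds}L(\zeta)=-L(\zeta)$ and $\frac{d}{ds}H(\zeta)=0$, so $DL(W)A=-DL(W)$ and $DH(W)A=0$.
\end{itemize}
Hence $\sigma(A)=\{0,0,-1,-1,-\tfrac12\pm i\omega\}$. Your second spectral block is empty, and the whole Duhamel integral runs from $-\infty$. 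Nothing lies near $\operatorname{Re}\lambda=\tfrac12$. Any non-semisimple part, e.g.\ a Jordan block at $0$ (moving along $\{L=0\}$ changes the rotation rate), only costs polynomial factors that $e^{\sigma/2}$ absorbs. Any $\kappa\in(0,\tfrac12)$ works.

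\textbf{The H\"older constant needs a different argument.} Integrating $|\dot z_j|\le(1+o(1))\,|\tfrac12+i\omega|\,|W_j|\,t^{-1/2}$ only yields $[z_j]_{C^{1/2}_T}\le(1+o(1))\,|1+2i\omega|\,|W_j|$. But $[w_j]_{C^{1/2}}$ is of order $|W_j|\sqrt{|\omega|}$ when $|\omega|$ is large, so the factor $3$ does not follow this way. Use $[z_j]\le[w_j]+[z_j-w_j]$ instead. Here $z_j-w_j=\sqrt t\,e^{i\omega s}r_j(s)+c(t)$, and both $|r|$ and $|r'|=|Ar+Q(r)+F|$ are small. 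Hence $[z_j-w_j]_{C^{1/2}_T}\lesssim(1+|\omega|)\sup(|r|+|r'|)+M\sqrt T\le[w_j]_{C^{1/2}}$ once $T^*$ is small.

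\textbf{What each route buys.} Citing \cite{grotto2022burst} is free. It ties the application to the lemma's exact hypotheses: a single field $f(t,\cdot)$ evaluated at each vortex, $C^2$ in space and Lipschitz in time. Its conclusion gives only emanation from $0$ plus a H\"older bound.

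Your construction gives more:
\begin{itemize}
\item uniqueness in the weighted class;
\item $z_j-w_j-c=O(t^{1/2+\kappa})$, i.e.\ convergence of the shape to the burst profile. This is exactly the closeness to $\xi^*$ that the proof of Theorem~\ref{prop:main} invokes but the lemma as stated does not assert;
\item it uses only $|F|+|\partial_rF|=O(e^{s/2})$. So it adapts to configuration-dependent perturbations such as $f_i(\xi)=\varepsilon\mathcal E_i(\varepsilon\xi)$, given a gradient bound on $\mathcal E_i$;
\item it does not need the time-Lipschitz part of $E_T$: continuity of $t\mapsto f(t,\cdot)$ in $C^2$ already makes $F$ continuous in $s$.
\end{itemize}
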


To apply this result, we must first control the behavior of the Green's function. Following \cite{drivas2024vortexpairs}, the singular behavior of the Green's function may be isolated. 
\begin{lem}\label{lem:green-func}
The Green's function of the surface Laplacian may be written as 
\begin{equation*}
    G_S(p, q) = -\frac{1}{2\pi}\log d_S(p, q) + R_S(p, q).
\end{equation*}
where, for all $\alpha\in(0, 1)$, there is a constant $C= C(S, \alpha)$ such that the ``regular part" satisfies
\begin{equation*}
    \lVert{R_S(\cdot, q)}\rVert_{C^{2, \alpha}(S)}\leq C\quad for ~all \quad q\in S.
\end{equation*}
\end{lem}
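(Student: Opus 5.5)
The plan is a parametrix construction followed by Schauder theory, with all constants made uniform in $q$ by compactness of $S$. As in the paper's conventions, $G_S$ solves $\Delta_p G_S(p,q)=\delta_q(p)-1/\mathrm{Area}(S)$ on the closed genus-zero surface, normalized by $\int_S G_S(\cdot,q)\,dA=0$. Let $i_0>0$ be the injectivity radius of $S$ and fix a smooth cutoff $\chi$ with $\chi(r)=1$ for $r<i_0/4$ and $\chi(r)=0$ for $r>i_0/2$. Define the parametrix
\begin{equation*}
\Phi(p,q)=-\frac{1}{2\pi}\,\chi(d_S(p,q))\log d_S(p,q),
\end{equation*}
and set $R_S(p,q)=G_S(p,q)-\Phi(p,q)$. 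Away from the diagonal ($d_S\ge i_0/4$) the term $-\frac{1}{2\pi}(1-\chi)\log d_S$ is smooth with uniform bounds. Adjusting $R_S$ by it, and by the harmless constant $\log 2\pi$ appearing in the paper's normalization, does not affect the claim. The first step is therefore to compute $\Delta_p\Phi(\cdot,q)=\delta_q+e_q$ and to identify the regularity of the error $e_q$.

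To compute $e_q$, work in geodesic normal coordinates $x=(r\cos\theta,r\sin\theta)$ centred at $q$. For a radial function $u(r)$ one has
\begin{equation*}
\Delta_g u=u''+\partial_r\log\bigl(r\sqrt{\det g}\bigr)\,u'.
\end{equation*}
With $u=\log r$, the $\partial_r\log r$ part reproduces the flat Laplacian and hence $2\pi\delta_q$. The remainder is therefore
\begin{equation*}
e_q=-\frac{1}{2\pi}\,\frac{\partial_r\psi(x)}{r},\qquad \psi=\log\sqrt{\det g},
\end{equation*}
on the set where $\chi\equiv1$. Elsewhere $e_q$ consists of smooth cutoff terms. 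The key input is the two-dimensional expansion
\begin{equation*}
\sqrt{\det g}(x)=1-\tfrac{K(q)}{6}|x|^2+O(|x|^3),
\end{equation*}
whose quadratic term is isotropic. It follows that $\partial_r\psi/r=-K(q)/3+r\,a(x)$, where $r\,a(x)$ is a smooth function of $x$ divided by $r^2$ and homogeneous of order at least one. Such a term is Lipschitz near $0$, so $e_q\in C^{0,1}(S)\subset C^{0,\alpha}(S)$. Because the metric coefficients in normal coordinates depend smoothly on the base point $q$, the bound $\|e_q\|_{C^{0,\alpha}}\le C$ holds uniformly in $q$.

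The regular part then satisfies
\begin{equation*}
\Delta_p R_S(\cdot,q)=-\frac{1}{\mathrm{Area}(S)}-e_q,
\end{equation*}
whose right-hand side is bounded in $C^{0,\alpha}$ uniformly in $q$ (the compatibility condition holds automatically since both sides integrate to zero). Global Schauder estimates on the closed manifold give
\begin{equation*}
\|R_S(\cdot,q)\|_{C^{2,\alpha}}\le C\bigl(\|\Delta R_S\|_{C^{0,\alpha}}+|\textstyle\int_S R_S(\cdot,q)\,dA|\bigr).
\end{equation*}
The mean $\int_S R_S(\cdot,q)\,dA=-\int_S\Phi(\cdot,q)\,dA$ is uniformly bounded because $\log d_S$ is integrable with a $q$-independent bound. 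This yields the lemma for every $\alpha\in(0,1)$.

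The main obstacle is the regularity of $e_q$. A crude bound only gives $e_q\in L^\infty$, which through $W^{2,p}$ estimates yields merely $C^{1,\alpha}$. The step that needs care is showing that the angular-dependent, degree-zero homogeneous part of $\partial_r\psi/r$ vanishes, which I would do via the isotropy of the $r^2$ term of $\det g$ (Gauss' lemma together with the curvature expansion in dimension two). If this turns out to be delicate, the fallback is to refine the parametrix by adding the explicit correction $\frac{K(q)}{12\pi}\,r^2\log r$-type terms, so that the leftover error is manifestly Hölder.
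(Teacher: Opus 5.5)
The paper gives no argument for this lemma beyond a citation to the earlier vortex-pairs work, so your proposal has to stand on its own. It breaks at the step where you say that $-\frac{1}{2\pi}(1-\chi(d_S))\log d_S$ is ``smooth with uniform bounds'' and that absorbing it into $R_S$ ``does not affect the claim.''

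The function $d_S(\cdot,q)$ is smooth only inside the injectivity radius; on the cut locus of $q$ it is merely Lipschitz. Concretely, let $p^*$ be a point where $d_S(\cdot,q)$ attains its maximum. Wherever $d_S(\cdot,q)$ is differentiable away from $q$ one has $|\nabla d_S(\cdot,q)|=1$, while at a differentiable interior maximum the gradient would vanish, so $d_S(\cdot,q)$ is not differentiable at $p^*$. Since $G_S(\cdot,q)$ is smooth on $S\setminus\{q\}$, the function $R_S(\cdot,q)=G_S(\cdot,q)+\frac{1}{2\pi}\log d_S(\cdot,q)$ is not even $C^1$ at $p^*$. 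On the unit sphere you can see this in the paper's own formula $R_{\mathbb S^2}=-\frac{1}{2\pi}\log(\sin(d/2)/d)$. Near the antipode, writing $d=\pi-|y|$, the factor $\sin(d/2)=\cos(|y|/2)$ is smooth, but $\frac{1}{2\pi}\log(\pi-|y|)$ has a conical singularity. So the global bound $\|R_S(\cdot,q)\|_{C^{2,\alpha}(S)}\le C$ is false as literally stated on every closed surface, and this step cannot be repaired. The statement itself has to be changed.

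The rest of your argument is sound and proves the correct version:
\begin{itemize}
\item The radial Laplacian via the Gauss lemma gives $\Delta_g\log r=2\pi\delta_q+\partial_r\psi/r$.
\item In dimension two the quadratic part of $\psi$ is $\psi_2=-\frac{K(q)}{6}|x|^2$. Hence $\partial_r\psi/r+\frac{K(q)}{3}=x\cdot\nabla(\psi-\psi_2)/|x|^2$, with numerator $O(|x|^3)$ and numerator gradient $O(|x|^2)$. This quotient is Lipschitz with constants controlled by $\|\psi\|_{C^4}$, uniformly in $q$.
\item The global Schauder estimate with control of the mean then applies.
\end{itemize}
This yields $\|G_S(\cdot,q)-\Phi(\cdot,q)\|_{C^{2,\alpha}(S)}\le C$ uniformly in $q$. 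Equivalently, $R_S(\cdot,q)$ is uniformly $C^{2,\alpha}$ on the geodesic ball $B_{i_0/4}(q)$, where $\chi\equiv1$. That localized statement, or the definition $R_S:=G_S-\Phi$ with the cutoff kept, is what you should prove. It is also all the paper actually uses, since its vortices remain in a small ball about $p$, far inside the injectivity radius.

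A minor point: from $\Delta_g\log r=2\pi\delta_q+\partial_r\psi/r$, the sign of your $e_q$ relative to $\delta_q$ is off under either Laplacian convention. This is harmless for the regularity argument. Your fallback with $r^2\log r$ corrections is also unnecessary.
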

\noindent
Using this decomposition, we write the Hamiltonian for any finite number of  vortices as 
\begin{equation}
    H(z) = -\frac{1}{2\pi}\sum_{i<j} \Gamma_i \Gamma_j  \log \lVert{z_i - z_j}\rVert + \mathsf{Reg}_S(z),
\end{equation}
where 
\begin{equation}
    \mathsf{Reg}_S(z) := - \frac{1}{2\pi}\sum_{i<j} \Gamma_i \Gamma_j \log \frac{d_S(z_i, z_j)}{\lVert{z_i-z_j}\rVert} + \sum_{i<j} \Gamma_i \Gamma_j R_S(z_i, z_j)  + \frac{1}{2}\sum_\ell \Gamma_\ell^2 R_S(z_\ell, z_\ell).
\end{equation}

Now, we aim to switch to local coordinates for some point on the surface, with respect to which we shall do our analysis. 
\begin{lem}
Let S be a surface embedded in $\mathbb{R}^3$, and fix a point $p \in S$. Let $\exp_p : T_pS \supset B_{R_0}(0)\mapsto S$ be the exponential map at let $\varphi = \exp_p^{-1}$ be the normal coordinates on $U := \exp_p(B_{R_0}(0))$. We write $\zeta_i = \varphi(z_i)$. Then, there exists constants $r_1 >0$ and $C>0$ depending on $S, p$, and the circulations, such that the following holds:
For every configuration in the neighborhood, the equations of motion may be written as 
\begin{equation}
    \dot \zeta_i = -\frac{1}{2\pi} J(0)\sum_{j \neq i} \Gamma_j \frac{\zeta_i - \zeta_j}{|\zeta_i - \zeta_j|^2} + \mathcal E_i(\zeta),
\end{equation}
where the remainder satisfies 
\begin{equation*}
    |\mathcal E_i| \leq C \left(|\zeta_i| \sum_{j \neq i} \frac{1}{|\zeta_i - \zeta_j|} + \sum_{j \neq i} (|\zeta_i| + |\zeta_j| + |\zeta_i - \zeta_j|) + 1 \right).
\end{equation*}
In particular, for some fixed $R,\kappa>0$, and some
$\varepsilon\in(0,r_1/R]$, suppose
\begin{equation*}
    |\zeta_i|\leq R\varepsilon,
    \qquad
    |\zeta_i-\zeta_j|\geq \kappa s(\zeta),
    \qquad
    s(\zeta):=\max_{1\leq \ell\leq 3}|\zeta_\ell|.
\end{equation*}
Then
\begin{equation*}
    |\mathcal E_i(\zeta)|\leq C_{R,\kappa}
\end{equation*}
for all $i$.
\end{lem}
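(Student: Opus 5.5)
Our plan is to write the point vortex equations on $S$ in normal coordinates centered at $p$, expand the geometric quantities around $\zeta=0$, and isolate the flat-space Biot--Savart term. The equations are $\Gamma_i\dot z_i = J(z_i)\widetilde\nabla_{z_i}H(z)$, with $H = -\frac{1}{2\pi}\sum_{i<j}\Gamma_i\Gamma_j\log d_S(z_i,z_j) + \sum_{i<j}\Gamma_i\Gamma_jR_S(z_i,z_j)+\frac12\sum_\ell\Gamma_\ell^2R_S(z_\ell,z_\ell)$ by Lemma~\ref{lem:green-func}. In normal coordinates the metric satisfies $g_{ab}(\zeta)=\delta_{ab}+O(|\zeta|^2)$ and $\partial g = O(|\zeta|)$. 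Hence $J(\zeta_i)=J(0)+O(|\zeta_i|)$, and the pullback of the Riemannian gradient equals $g^{-1}(\zeta_i)\nabla_{\zeta_i}=(I+O(|\zeta_i|^2))\nabla_{\zeta_i}$. Dividing by $\Gamma_i$, the velocity is $J(\zeta_i)g^{-1}(\zeta_i)\nabla_{\zeta_i}H/\Gamma_i$, up to the coordinate Jacobian of $\varphi$, which is the identity at $0$ with $O(|\zeta|)$ error. We therefore split $\dot\zeta_i$ into three pieces: the singular part coming from $-\frac{1}{2\pi}\log d_S$, the regular part coming from $R_S$, and the errors from replacing $J(\zeta_i)$ and $g^{-1}$ by their values at $0$.

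\textbf{Regular part.} By Lemma~\ref{lem:green-func}, $\nabla R_S(\cdot,q)$ is bounded uniformly in $q$. The diagonal term $R_S(z,z)$ is smooth on the compact closure of $U$. Both therefore contribute $O(1)$, which is the ``$+1$'' term in the bound.

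\textbf{Singular part.} This is the key step: comparing $d_S(z_i,z_j)$ with $|\zeta_i-\zeta_j|$. A standard normal-coordinate estimate gives $d_S(z_i,z_j)^2 = |\zeta_i-\zeta_j|^2\bigl(1+Q(\zeta_i,\zeta_j)\bigr)$, where $Q$ is smooth, $Q=O\bigl((|\zeta_i|+|\zeta_j|)^2\bigr)$, and $\nabla Q = O(|\zeta_i|+|\zeta_j|)$. For the leading correction, recall that $Q$ involves curvature times $|\zeta_i\wedge\zeta_j|^2/|\zeta_i-\zeta_j|^2$. Then
\[
\nabla_{\zeta_i}\log d_S=\frac{\zeta_i-\zeta_j}{|\zeta_i-\zeta_j|^2}+\tfrac12\nabla_{\zeta_i}\log(1+Q),
\]
and the second term is $O(|\zeta_i|+|\zeta_j|+|\zeta_i-\zeta_j|)$. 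The main term $\frac{\zeta_i-\zeta_j}{|\zeta_i-\zeta_j|^2}$ is multiplied by $J(\zeta_i)g^{-1}(\zeta_i)-J(0)=O(|\zeta_i|)$, which produces the error $C|\zeta_i|\sum_{j}|\zeta_i-\zeta_j|^{-1}$. Collecting all the terms yields the stated bound on $\mathcal E_i$, with $r_1$ chosen so that these expansions hold on $B_{r_1}(0)\subset B_{R_0}(0)$.

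I expect the main obstacle to be justifying $\nabla Q=O(|\zeta_i|+|\zeta_j|)$ uniformly. The difficulty is that $Q$ involves the ratio $|\zeta_i\wedge\zeta_j|/|\zeta_i-\zeta_j|$, whose smoothness at $\zeta_i=\zeta_j$ is not obvious. I would handle this by writing $d_S^2$ through Synge's world function, $\sigma(\zeta_i,\zeta_j)=\frac12|\zeta_i-\zeta_j|^2+$ terms of the form $\sum T_{abcd}(\zeta)\,(\zeta_i-\zeta_j)^a(\zeta_i-\zeta_j)^b\zeta^c\zeta^d$ with smooth coefficients. This representation is obtained by Taylor expansion along the segment, using that $\sigma$ vanishes to second order on the diagonal. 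Dividing by $|\zeta_i-\zeta_j|^2$ then gives a smooth-in-direction, bounded $Q$ with the required gradient bound.

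\textbf{The ``in particular'' statement.} Once the general bound holds, the second claim is immediate. From $|\zeta_i-\zeta_j|\ge\kappa s(\zeta)$ and $|\zeta_i|\le s(\zeta)$ we get $|\zeta_i|/|\zeta_i-\zeta_j|\le 1/\kappa$. The remaining terms are at most $4R\varepsilon+1\le 4r_1+1$. Together these give $|\mathcal E_i|\le C_{R,\kappa}$.
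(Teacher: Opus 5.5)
Your route is essentially the paper's: normal coordinates, split off the flat kernel, bound the Robin part via Lemma~\ref{lem:green-func}, and charge the errors from $J(\zeta_i)-J(0)$ and $g^{-1}-I$ to the term $|\zeta_i|\sum_{j}|\zeta_i-\zeta_j|^{-1}$. The one difference is that you compare $d_S$ with $|\zeta_i-\zeta_j|$ directly. The paper instead detours through the ambient chord, putting $\log(d_S/\|z_i-z_j\|)$ into $\Reg_S$ and $\log(\|\exp_p\zeta_i-\exp_p\zeta_j\|/|\zeta_i-\zeta_j|)$ into $\mathsf{Err}$. Your direct comparison is cleaner.

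However, the step you flagged as the main obstacle fails as stated: $\nabla_{\zeta_i}Q=O(|\zeta_i|+|\zeta_j|)$ is false. As you observe, the leading part of $Q$ is $-\frac{K}{3}|\zeta_i\wedge\zeta_j|^2/|\zeta_i-\zeta_j|^2$, which is homogeneous of degree zero in $w=\zeta_i-\zeta_j$. On the unit sphere, take $\zeta_j=(a,0)$ fixed and $\zeta_i=\zeta_j+\delta(\cos\phi,\sin\phi)$. Then this term equals $-\frac13a^2\sin^2\phi$, and its angular derivative gives $|\nabla_{\zeta_i}Q|\approx\frac{a^2}{3\delta}|\sin 2\phi|$, which is not $O(a)$ as $\delta\to0$.

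The world-function representation cannot fix this. It yields $Q=E_{ab}(\zeta_i,\zeta_j)\hat w^a\hat w^b$ with $E$ smooth, $|E|\le C(|\zeta_i|+|\zeta_j|)^2$ and $|\nabla E|\le C(|\zeta_i|+|\zeta_j|)$. But differentiating $\hat w=w/|w|$ costs a factor $|w|^{-1}$, so what it actually proves is
\[
|\nabla_{\zeta_i}Q|\le C(|\zeta_i|+|\zeta_j|)+C\,\frac{(|\zeta_i|+|\zeta_j|)^2}{|\zeta_i-\zeta_j|}.
\]

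The repair is short. Using $|\zeta_j|\le|\zeta_i|+|\zeta_i-\zeta_j|$ and $|\zeta_i|\le r_1\le 1$, the extra term is at most $8|\zeta_i|/|\zeta_i-\zeta_j|+2|\zeta_i-\zeta_j|$. This is absorbed into the $|\zeta_i|\sum_j|\zeta_i-\zeta_j|^{-1}$ and $\sum_j|\zeta_i-\zeta_j|$ terms of the claimed bound. The rest of your argument, including the ``in particular'' part, then goes through unchanged.

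The point to take away is that the distance correction is a relative $O(|\zeta|^2)$ distortion of the Biot--Savart kernel, of the same nature as the $g^{-1}-I$ correction. It is not a remainder with gradient $O(|\zeta|)$.

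The paper's proof asserts the same over-strong bound, $|\nabla_{\zeta_i}D^g_{ij}|+|\nabla_{\zeta_i}D^e_{ij}|\le C(|\zeta_i|+|\zeta_j|+|\zeta_i-\zeta_j|)$, and hence the corresponding bound for $\nabla_{\zeta_i}\mathsf{Err}$. It therefore has the same defect and needs the same fix. By contrast, the geodesic-to-chord ratio inside $\Reg_S$ is $1+O(|\zeta_i-\zeta_j|^2)$, and its logarithm does have gradient $O(|\zeta_i-\zeta_j|)$.
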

\begin{proof}
We work in normal coordinates centered at $p$. In these coordinates, for a sufficiently small neighborhood, the metric satisfies
\begin{equation*}
    g_{ij}(\zeta)=\delta_{ij}+Q_{ij}(\zeta), \qquad |Q_{ij}(\zeta)|\leq C_g|\zeta|^2,
\end{equation*}
and the almost complex structure satisfies
\begin{equation*}
    J(\zeta)=J(0)+R_J(\zeta), \qquad |R_J(\zeta)|\leq C_J|\zeta|.
\end{equation*}
We write the Hamiltonian in the form
\begin{equation*}
    H(\zeta)=-\frac{1}{2\pi}\sum_{i<j}\Gamma_i\Gamma_j\log|\zeta_i-\zeta_j|+\Reg_S(\zeta)+\mathsf{Err}(\zeta),
\end{equation*}
where
\begin{equation*}
    \mathsf{Err}(\zeta)=-\frac{1}{2\pi}\sum_{i<j}\Gamma_i\Gamma_j\log\left(\frac{\|\exp_p(\zeta_i)-\exp_p(\zeta_j)\|}{|\zeta_i-\zeta_j|}\right).
\end{equation*}

We first bound the contribution from $\Reg_S$. By definition, $\Reg_S$ contains the Robin terms together with the correction
\begin{equation*}
    -\frac{1}{2\pi}\sum_{i<j}\Gamma_i\Gamma_j\log\left(\frac{d_S(z_i,z_j)}{\|z_i-z_j\|}\right).
\end{equation*}
In normal coordinates, the geodesic distance and the ambient Euclidean distance satisfy
\begin{equation*}
    d_S(\exp_p\zeta_i,\exp_p\zeta_j)=|\zeta_i-\zeta_j|\left(1+D^g_{ij}(\zeta_i,\zeta_j)\right),
\end{equation*}
and
\begin{equation*}
    \|\exp_p(\zeta_i)-\exp_p(\zeta_j)\|=|\zeta_i-\zeta_j|\left(1+D^e_{ij}(\zeta_i,\zeta_j)\right).
\end{equation*}
After shrinking the coordinate neighborhood if necessary, these errors satisfy
\begin{equation*}
    |D^g_{ij}|+|D^e_{ij}|\leq C\left((|\zeta_i|+|\zeta_j|)^2+|\zeta_i-\zeta_j|^2\right),
\end{equation*}
and
\begin{equation*}
    |\nabla_{\zeta_i}D^g_{ij}|+|\nabla_{\zeta_i}D^e_{ij}|\leq C\left(|\zeta_i|+|\zeta_j|+|\zeta_i-\zeta_j|\right).
\end{equation*}
Therefore,
\begin{equation*}
    \log\left(\frac{d_S(\exp_p\zeta_i,\exp_p\zeta_j)}{\|\exp_p(\zeta_i)-\exp_p(\zeta_j)\|}\right)=\log\left(\frac{1+D^g_{ij}}{1+D^e_{ij}}\right),
\end{equation*}
and hence
\begin{equation*}
    \left|\nabla_{\zeta_i}\log\left(\frac{d_S(\exp_p\zeta_i,\exp_p\zeta_j)}{\|\exp_p(\zeta_i)-\exp_p(\zeta_j)\|}\right)\right|\leq C\left(|\zeta_i|+|\zeta_j|+|\zeta_i-\zeta_j|\right).
\end{equation*}
Together with the regularity of the Robin terms, this gives
\begin{equation*}
    |\nabla_{\zeta_i}\Reg_S(\zeta)|\leq C.
\end{equation*}
Next, we estimate $\mathsf{Err}$. By the expansion of the ambient Euclidean distance in normal coordinates,
\begin{equation*}
    \|\exp_p(\zeta_i)-\exp_p(\zeta_j)\|=|\zeta_i-\zeta_j|\left(1+D^e_{ij}(\zeta_i,\zeta_j)\right),
\end{equation*}
with $D^e_{ij}$ satisfying the bounds above. Hence
\begin{equation*}
    \log\left(\frac{\|\exp_p(\zeta_i)-\exp_p(\zeta_j)\|}{|\zeta_i-\zeta_j|}\right)=\log(1+D^e_{ij}).
\end{equation*}
Since $D^e_{ij}$ is small on $U$, we obtain
\begin{equation*}
    \left|\nabla_{\zeta_i}\log\left(\frac{\|\exp_p(\zeta_i)-\exp_p(\zeta_j)\|}{|\zeta_i-\zeta_j|}\right)\right|\leq C\left(|\zeta_i|+|\zeta_j|+|\zeta_i-\zeta_j|\right).
\end{equation*}
Therefore,
\begin{equation*}
    |\nabla_{\zeta_i}\mathsf{Err}(\zeta)|\leq C\sum_{j\neq i}\left(|\zeta_i|+|\zeta_j|+|\zeta_i-\zeta_j|\right).
\end{equation*}
Let us denote the flat Hamiltonian by
\begin{equation*}
    H_0:=-\frac{1}{2\pi}\sum_{i<j}\Gamma_i\Gamma_j\log|\zeta_i-\zeta_j|.
\end{equation*}
Thus $H=H_0+\Reg_S+\mathsf{Err}$. We have the bound on the surface derivative
\begin{equation*}
    \widetilde\nabla_{\zeta_i}H=\nabla_{\zeta_i}H+\mathcal G_i(\zeta),
\end{equation*}
with
\begin{equation*}
    |\mathcal G_i(\zeta)|\leq C|\zeta_i|^2|\nabla_{\zeta_i}H(\zeta)|.
\end{equation*}
Using
\begin{equation*}
    |\nabla_{\zeta_i}H_0|\leq C\sum_{j\neq i}\frac{1}{|\zeta_i-\zeta_j|},
\end{equation*}
together with the bounds on $\nabla_{\zeta_i}\Reg_S$ and $\nabla_{\zeta_i}\mathsf{Err}$, we get
\begin{equation*}
    |\mathcal G_i(\zeta)|\leq C|\zeta_i|^2\left(\sum_{j\neq i}\frac{1}{|\zeta_i-\zeta_j|}+\sum_{j\neq i}\left(|\zeta_i|+|\zeta_j|+|\zeta_i-\zeta_j|\right)+1\right).
\end{equation*}
Since we may assume $|\zeta_i|\leq r_1\leq 1$ by shrinking $r_1$ if necessary, this implies
\begin{equation*}
    |\mathcal G_i(\zeta)|\leq C\left(|\zeta_i|\sum_{j\neq i}\frac{1}{|\zeta_i-\zeta_j|}+1\right),
\end{equation*}
where the remaining lower-order terms have been absorbed into the constant.

Using $J(\zeta_i)=J(0)+R_J(\zeta_i)$, we write
\begin{equation*}
    \Gamma_i\dot{\zeta}_i=J(0)\nabla_{\zeta_i}H_0+R_J(\zeta_i)\nabla_{\zeta_i}H_0+J(\zeta_i)\left(\nabla_{\zeta_i}\Reg_S+\nabla_{\zeta_i}\mathsf{Err}+\mathcal G_i\right).
\end{equation*}
The leading term is
\begin{equation*}
    J(0)\nabla_{\zeta_i}H_0=-\frac{\Gamma_i}{2\pi}J(0)\sum_{j\neq i}\Gamma_j\frac{\zeta_i-\zeta_j}{|\zeta_i-\zeta_j|^2}.
\end{equation*}
For the second term, we use the bound on $R_J$ together with the estimate for $\nabla H_0$ to obtain
\begin{equation*}
    |R_J(\zeta_i)\nabla_{\zeta_i}H_0|\leq C|\zeta_i|\sum_{j\neq i}\frac{1}{|\zeta_i-\zeta_j|}.
\end{equation*}
Since $J(\zeta_i)$ is bounded on $U$, the remaining terms satisfy
\begin{equation*}
    \left|J(\zeta_i)\left(\nabla_{\zeta_i}\Reg_S+\nabla_{\zeta_i}\mathsf{Err}+\mathcal G_i\right)\right|\leq C\left(|\zeta_i|\sum_{j\neq i}\frac{1}{|\zeta_i-\zeta_j|}+\sum_{j\neq i}\left(|\zeta_i|+|\zeta_j|+|\zeta_i-\zeta_j|\right)+1\right).
\end{equation*}
Combining these estimates and dividing by $|\Gamma_i|$, we obtain
\begin{equation}\label{eq:local}
    \dot{\zeta}_i=-\frac{1}{2\pi}J(0)\sum_{j\neq i}\Gamma_j\frac{\zeta_i-\zeta_j}{|\zeta_i-\zeta_j|^2}+\mathcal E_i(\zeta),
\end{equation}
where
\begin{equation*}
    |\mathcal E_i(\zeta)|\leq C\left(|\zeta_i|\sum_{j\neq i}\frac{1}{|\zeta_i-\zeta_j|}+\sum_{j\neq i}\left(|\zeta_i|+|\zeta_j|+|\zeta_i-\zeta_j|\right)+1\right).
\end{equation*}
Finally, suppose that
\begin{equation*}
    |\zeta_i|\leq R\varepsilon, \qquad |\zeta_i-\zeta_j|\geq \kappa s(\zeta), \qquad s(\zeta):=\max_{1\leq \ell\leq 3}|\zeta_\ell|.
\end{equation*}
Then
\begin{equation*}
    |\zeta_i|\sum_{j\neq i}\frac{1}{|\zeta_i-\zeta_j|}\leq s(\zeta)\sum_{j\neq i}\frac{1}{\kappa s(\zeta)}\leq \frac{2}{\kappa}.
\end{equation*}
Moreover,
\begin{equation*}
    \sum_{j\neq i}\left(|\zeta_i|+|\zeta_j|+|\zeta_i-\zeta_j|\right)\leq C_R\varepsilon\leq C_R.
\end{equation*}
Hence
\begin{equation*}
    |\mathcal E_i(\zeta)|\leq C_{R,\kappa}
\end{equation*}
for all $i$, completing the proof.
\end{proof}

We now complete the main proof of Theorem \ref{prop:main}.
\begin{proof}
Introduce the scaled variables 
\begin{equation*}
    \zeta_i(t) = \varepsilon \xi_i(\tau), \qquad \tau=\frac{t}{\varepsilon^2}.
\end{equation*}
Then, $
    \dot \zeta_i(t) = \frac{1}{\varepsilon} \frac{d\xi_i}{d\tau}.$
Substituting this into the local equation (\ref{eq:local}) from the lemma yields
\begin{equation}\label{eq:rescaled}
    \frac{d\xi_i}{d\tau} = -\frac{1}{2\pi}J(0) \sum_{j\neq i}\Gamma_j \frac{\xi_i-\xi_j}{|\xi_i-\xi_j|^2} + f_i(\xi).
\end{equation}
where $
    f_i(\xi) := \varepsilon \mathcal E_i(\varepsilon \xi).$
Hence, the rescaled dynamics is now a planar three-vortex system plus a perturbation $f_i$. In rescaled coordinates, we use the scale-invariant nondegeneracy region
\begin{equation*}
    \mathcal C_{\kappa} := \left\{ \xi\in(\mathbb R^2)^3\setminus\Delta^3: 0<r(\xi)\leq 1,\quad |\xi_i-\xi_j|\geq \kappa r(\xi)
\text{ for all }i\neq j \right\},
\end{equation*}
where
\begin{equation*}
    r(\xi):=\max_{1\leq \ell\leq 3}|\xi_\ell|.
\end{equation*}
If $\xi\in\mathcal C_{\kappa}$, then for
\(\zeta_i=\varepsilon\xi_i\) we have
\begin{equation*}
    |\zeta_i|\leq \varepsilon, \qquad |\zeta_i-\zeta_j| = \varepsilon|\xi_i-\xi_j| \geq \varepsilon\kappa r(\xi) = \kappa s(\zeta).
\end{equation*}
Therefore, by the previous lemma, applied with unit radius,
\begin{equation*}
    |\mathcal E_i(\varepsilon\xi)|\leq C_{\kappa}.
\end{equation*}
Hence
\begin{equation*}
    |f_i(\xi)|\leq C_{\kappa}\varepsilon.
\end{equation*}
Since the planar collapsing profile $\xi^*$ is self-similar and nondegenerate, we may normalize it so that \(r(\xi^*(\tau))\leq 1\) on the time interval under consideration. The ratios
\begin{equation*}
    \frac{|\xi_i^*(\tau)-\xi_j^*(\tau)|}{r(\xi^*(\tau))}
\end{equation*}
are constant in $\tau$ and strictly positive. Thus, after choosing
$\kappa>0$ appropriately, we have
\begin{equation*}
    \xi^*(\tau)\in\mathcal C_{\kappa}
\end{equation*}
for $0<\tau<T$. By taking the perturbed solution sufficiently close to
$\xi^*$, and replacing $\kappa$ by a smaller constant if necessary, we may assume that
\begin{equation*}
    \xi(\tau)\in\mathcal C_{\kappa}
\end{equation*}
for $0<\tau<T$. Hence, the bounds on $\mathcal E_i$ remain valid along the trajectory.

Since $\lVert{f_i}\rVert_{E_T} \to 0$ as $\varepsilon \to 0$, there exists $\varepsilon_0$ such that for every $0 < \varepsilon < \varepsilon_0$, the rescaled perturbed system admits a solution $\xi(\tau)$ remaining close to $\xi^*(\tau)$, corresponding to a collapsing trajectory. Returning to the original surface variables by 
\begin{equation*}
    \zeta_i(t) = \varepsilon \xi_i(\tau), \qquad \tau=\frac{t}{\varepsilon^2},
\end{equation*}
it follows that 
\begin{equation*}
    \zeta_i(t) \to 0 \quad \mathrm{as} ~ t  \to t^*,
\end{equation*}
and hence 
\begin{equation*}
    z_i(t) \to p.
\end{equation*}
Thus, the three vortices collapse to $p$.

It remains to verify that the collapse is asymptotically self-similar. Starting with the rescaled equation of motion (\ref{eq:rescaled}), let us define $q_{ij} = \xi_i - \xi_j$. Then, 
\begin{equation}
    \frac{d}{d\tau} q_{ij}^2 = \frac{2\Gamma_k A}{\pi}\left(\frac{1}{q_{jk}^2} - \frac{1}{q_{ik}^2} \right) + 2 q_{ij}\cdot(f_i - f_j),
\end{equation}
where $A$ is the signed area of the triangle, $A = \frac{1}{2}q_{ij} \cdot J(0) r_A = \frac{1}{2}q_{ij} \cdot J(0) q_{ik}$. The first term is precisely the same as on the plane, whereas the second term is the error from the surface. Using the same bounds as before, $|2 q_{ij} \cdot (f_i - f_j)|\leq 4 C_{\kappa} \varepsilon|q_{ij}|$. Hence, as $\tau \to \tau_c$, the error term vanishes. Equivalently, the conditions in Definition \ref{def.near-self-similar} is satisfied, hence collapse is asymptotically self-similar. Furthermore, due to the local Euclidean nature of $S$, every collapse of three vortices on $S$ must be asymptotically self-similar.
\end{proof}

\section*{Acknowledgements}
We would like to thank Nathan Carlson, Daniil Glukhovskiy, and Boris Khesin for helpful discussions.  In particular, we would like to thank B. Khesin for informing us of \cite{simo2022n}.
The work of TDD was supported by the NSF CAREER award \#2235395, a Stony Brook
University Trustee’s award as well as an Alfred P. Sloan Fellowship. TDD acknowledges support of the Institut Henri Poincaré (UAR 839 CNRS-Sorbonne Université), and LabEx CARMIN (ANR-10-LABX-59-01).

\bibliographystyle{unsrt}  
\bibliography{bibliography}

\end{document}